\newtheorem{theorem}{Theorem}
\newtheorem{proposition}[theorem]{Proposition}
\numberwithin{theorem}{section}
\newcommand{\TheTitle}{Probabilistic Numerical Methods for Partial Differential Equations and Bayesian Inverse Problems\thanks{Supplementary material for this work is available at \url{http://joncockayne.com/papers/pmm/supplement}. 
  TJS is supported by the Free University of Berlin within the Excellence Initiative of the German Research Foundation (DFG).
MG was supported by EPSRC [EP/J016934/1, EP/K034154/1], an EPSRC Established Career Fellowship, the EU grant [EU/259348] and a Royal Society Wolfson Research Merit Award.
CJO and MG were supported by the Programme on Data-Centric Engineering at the Alan Turing Institute.
This material was based upon work partially supported by the National Science Foundation under Grant DMS-1127914 to the Statistical and Applied Mathematical Sciences Institute. Any opinions, findings, and conclusions or recommendations expressed in this material are those of the author(s) and do not necessarily reflect the views of the National Science Foundation.
}} 
\newcommand{\TheAuthors}{Jon Cockayne, Chris Oates, T.\ J.\ Sullivan and Mark Girolami}
\newcommand*{\email}[1]{\href{mailto:#1}{\nolinkurl{#1}} } 
\title{{\TheTitle}}
\author{
  Jon Cockayne\thanks{University of Warwick
    (\email{j.cockayne@warwick.ac.uk}).}
  \and
  Chris J.\ Oates\thanks{Newcastle University and Alan Turing Institute
    (\email{chris.oates@ncl.ac.uk}).}
  \and
  T.\ J.\ Sullivan\thanks{Free University of Berlin and Zuse Institute Berlin
    (\email{sullivan@zib.de}).}
  \and
  Mark Girolami\thanks{Imperial College and Alan Turing Institute
    (\email{m.girolami@imperial.ac.uk}).}
}
\DeclareMathOperator{\diag}{diag}
\DeclareMathOperator*{\argmin}{arg\,min}
\DeclareMathOperator{\trace}{Tr}
\newcommand{\COMMENT}[1]{}
\newcommand{\set}[1]{\left\{#1\right\}}
\newcommand{\norm}[1]{\left\| #1 \right\|}
\newcommand{\inner}[1]{\left\langle #1 \right\rangle}
\newcommand{\abs}[1]{\left| #1 \right|}
\newcommand*{\quark}{\setbox0\hbox{$x$}\hbox to\wd0{\hss$\cdot$\hss}}
\newcommand*{\reals}{\mathbb{R}}
\newcommand*{\laplacian}{\nabla^2}
\newcommand*{\wrt}{\textup{d}}
\newcommand\numberthis{\addtocounter{equation}{1}\tag{\theequation}}
\newcommand{\pnparam}{\zeta}
\newcommand{\pnspace}{\mathcal{Z}}
\begin{document}
\maketitle
\begin{abstract}
This paper develops a probabilistic numerical method for solution of partial differential equations (PDEs) and studies application of that method to PDE-constrained inverse problems.
This approach enables the solution of challenging inverse problems whilst accounting, in a statistically principled way, for the impact of discretisation error due to numerical solution of the PDE.
In particular, the approach confers robustness to failure of the numerical PDE solver, with statistical inferences driven to be more conservative in the presence of substantial discretisation error.
Going further, the problem of choosing a PDE solver is cast as a problem in the Bayesian design of experiments, where the aim is to minimise the impact of solver error on statistical inferences; here the challenge of non-linear PDEs is also considered.
The method is applied to parameter inference problems in which discretisation error in non-negligible and must be accounted for in order to reach conclusions that are statistically valid.
\end{abstract}
\COMMENT{
\begin{keywords}
Collocation Methods,
Gaussian Processes,
Probabilistic Meshless Methods,
Uncertainty Quantification
\end{keywords}
\begin{AMS}
65N21, 65N35, 65N75, 62-02, 62G08, 62M40
\end{AMS}
}

%\tableofcontents

\section{Introduction}

% Statistical inverse problems - what are they
Differential equations provide a natural language in which to describe many scientific phenomena of interest.
In particular, partial differential equations (PDEs) are widely used to describe phenomena such as heat transfer, electrical conductivity and temporal processes on continuous domains.
Models based on PDEs can involve finite- or infinite-dimensional physical parameters, such as the conductivity field of a heterogeneous media, whose precise values are unknown yet must be specified before the model can be used.
In such situations, statistical methods can be used to \emph{estimate} these parameters.
These statistical methods operate on the basis of data, together with statistical assumptions that describe how the data relate to the posited PDE model.
The applied mathematics literature refers to the task of estimating unknown parameters on the basis of data as an \emph{inverse problem} \cite{Kaipio2006, Stuart2010}.

Several important challenges are raised by the increasing sophistication of mathematical models built with PDEs.
The principal challenge, in general, is that standard methods to solve an inverse problem require that the PDE be solved for numerous candidate values of the unknown parameters.
These solutions are then compared to data to determine which candidate parameters are most plausible, according to the specifics of the statistical method.
For generic PDEs a closed-form solution does not exist, so a discretisation of the continuous equations must be used to approximate the solution.
Theoretical results from numerical analysis typically bound the approximation error as a function of the discretisation parameters, which justifies this approach, but a reduction in approximation error comes at an increased computational cost.
This raises the following question:
is it possible to compute with a coarse discretisation of the PDE and yet produce a meaningful solution to the inverse problem?

This paper gives an affirmative answer by pursuing the development of a \emph{probabilistic numerical method} (PNM, \cite{Hennig2015a}) for the solution of a generic PDE.
The output of a PNM is a probability distribution over the solution space of the PDE, where stochasticity is used as a device to quantify epistemic uncertainty resulting from the discretisation.
Once built, the PNM can be used to solve the PDE up to a quantified degree of uncertainty.
However, unlike classical error estimators for numerical solution of PDEs \citep{Iserles2009}, the nature of this quantification is \emph{statistical}.
This last point is crucial, as it enables uncertainty to be propagated through all subsequent statistical computations and, ultimately, used to inform conclusions that are drawn on the unknown parameters of interest.

This research showcases the emerging field of \emph{probabilistic numerics}, wherein numerical methods are studied from a statistical perspective, which allows them to be assessed, compared and even \emph{designed} according to established statistical criteria.
Diaconis \cite{Diaconis1988} traced the philosophical foundations of probabilistic numerics back to Poincar\'{e}, alongside other landmark and foundational papers \cite{Hull1966, Kadane1985}. 
To date, research in this field includes solvers for linear systems \citep{Bartels:2016eh, Hennig2015}, differential equations \citep{Barber2014, Calderhead2009, Chkrebtii:2013ux, Conrad2015, Dondelinger2013, Kersting2016, Macdonald2015, Owhadi2015, Schober2014, Schober:2016uh} and integrals \citep{Briol2016, OHagan1991}. 
PNMs for the solution of PDEs initially concerned probabilistic models for rounding error in floating point computation \citep{Hull1966}, a topic which is still of some interest \citep{Hairer2008, Mosbach2005}. 
The focus of \cite{Skilling1992} was instead to model discretisation error for ordinary differential equations, similar to our present work on PDEs. The work of \cite{Conrad2015} constructed a PNM by introduction of random perturbations into a finite element basis used in the Galerkin method.
The authors also showed how this uncertainty could be taken into account in the context of inverse problems.
The methods proposed and studied in this work are markedly different from the above, as described next.

\subsection{Contribution}

The aim of this paper is to develop a PNM to solve the strong formulation of a PDE, as opposed to the weak (or \emph{variational}) form considered in \cite{Conrad2015}.
Our approach is modelled on recent work in Bayesian methods \citep{Stuart2010} and begins with a \emph{prior} distribution over the solution space of the PDE.
This prior is then restricted to a subset of the solution space on the basis of information about the true solution of the PDE.
This can be thought of as imposing the governing equations of the PDE at a finite number of locations in the domain of interest, rather than at infinitely many such locations.
The choice of locations at which to impose the equations constitutes the discretisation of the PDE.
The resultant restriction is called a \emph{posterior} distribution and this constitutes the output of our PNM.
The associated stochasticity is used as an abstract device to represent epistemic uncertainty due to discretisation of the original PDE.

The specific contributions of this work are as follows:
\begin{itemize}
\item Presentation of the novel PNM, which we term a \emph{probabilistic meshless method} (PMM).
\item Theoretical analysis of the PMM output; how this contracts to the exact solution of the PDE in the limit of infinite computational effort.
\item Discussion of prior selection for PDE models, with theoretical results for linear PDEs.
\item Theoretical justification for the use of the PMM in PDE-constrained inverse problems.
Estimates for parameters in the PDE are demonstrated to be robust to solver failure, as a consequence of the full quantification of discretisation error being performed.
\item Extension to a class of non-linear PDEs, where uniqueness of solution is not guaranteed.
\item Empirical results based on the use of the PMM in the context of inverse problems that occur in electro-impedance tomography and in a prototypical non-linear PDE.
\end{itemize}

These contributions suggest several areas for future theoretical and methodological development.
Indeed, the development of the PMM serves to highlight the pertinent statistical considerations that inform the development of PNMs in general.
The research landscape in which our paper exists is described next.

\subsection{Related Work}

% Meshless methods - what are they, pros/cons 
The method described herein has much in common with symmetric collocation \cite{Fasshauer1999}.
This is an example of a \emph{meshless} method that also enforces the governing equations at a finite number of locations.
Meshless methods for PDEs are characterised by non-reliance on the construction of a mesh over the domain of interest \citep{Fasshauer1996, Franke1998, Hon2008}.
This is an attractive property when the domain itself is time-evolving, as it circumvents the need to re-compute a mesh or grid.
There is empirical evidence in support of meshless methods in the solution of PDEs in situations with strong boundary effects, or when the domain itself is time-dependent and the evolutionary rates are large \citep{Li2010, Li2012}.
However, compared to other numerical methods for solving PDEs, such as finite element methods (FEM) and finite difference methods (FDM), theoretical analysis of collocation methods has been limited, with main contributions including \cite{Behrens2002, Cialenco2012, Lorentz2003, Wendland2004}, and \cite{Chi2013}.
In particular there has been little investigation of the suitability of meshless methods in the inverse problem context.
This PNM developed herein is related to symmetric collocation, due to a connection between collocation and stochastic PDEs that casts the latter as a stochastic relaxation of the former \citep{Cialenco2012}.

The solution of operator equations has been formulated as a statistical problem in several papers (e.g.~\cite{Cialenco2012, Fasshauer2013, Graepel2003, Raissi2017a, Raissi2017, Saerkkae2011, Skilling1992}).
The present approach is close to these papers in spirit, but the motivation of this work is to make valid inferences for the inverse problem, which these papers do not consider.
Recent work by \cite{Barber2014} considered the inverse problem and proposed a method to obtain valid statistical inferences based on a probabilistic model for solver error.
A shortcoming of \cite{Barber2014} is an absence of theoretical analysis, as well as the restriction of attention to finite-dimensional parameters; our work addresses these points.

% Emulation
In the context of an inverse problem, data are related to the PDE model of interest through a statistical \emph{likelihood}.
For PDEs, the use of approximate likelihoods to reduce computational cost in solution of the inverse problem has been widely explored.
Several approaches start by building an approximate solution to the PDE and penalising deviation between derivatives of this approximate solution and the derivative values that are required; see \cite{Campbell2007, Dattner2013, Heinonen2014, Ramsay2007} and the references therein.

Closer in spirit to the approach pursued below, \cite{Marzouk2009} used polynomial chaos expansions to approximate the likelihood, treating the mapping from parameters to data as a black-box.
A similar approach was proposed in \cite{Webster1996} and \cite{Ma2009}.
Both \cite{Marzouk2009} and \cite{Ma2009} apply their likelihood approximations to solve inverse problems and \cite{Marzouk2009} establishes that the approximate and exact posterior coincide in an appropriate limit.
However, these papers do not take into account the approximation error when making inferences, meaning that careful control of error in the forward problem is still required to avoid the problems of bias and over-confidence exposed in \cite{Conrad2015}.
Recent work of \cite{Stuart2016} constructed Gaussian process emulators and integrated their associated uncertainties into solution of an inverse problem; our work differs in that it exploits the specific form of the PDE model rather than treating the PDE as a black-box.

The solution of differential equations has been a recent focus of PNM development. 
For ordinary differential equations, \cite{Conrad2015} constructed a probabilistic method for modelling discretisation error in numerical integrators, while \cite{Schober2014} revealed the underlying uncertainty model that is implied by Runge--Kutta methods and \cite{Schober:2016uh} provided a connection between these solvers and Nordsieck methods. 
See also the recent contribution by \cite{Kersting2016}.
A similar approach is followed in \cite{Chkrebtii:2013ux}, yielding a nonparametric posterior rather than a Gaussian approximation.
For PDEs, \cite{Kaipio2007} fitted Gaussian models for PDE errors and \cite{Conrad2015} proposed PNM interpretations of FEM (see also \cite{Arnold2013}).
Related work by \cite{Capistran2016} analysed the impact of discretisation error using Bayes factors.

\subsection{Outline}

The paper proceeds as follows:
Section~\ref{context and motivation} establishes the set-up and notation, while Section~\ref{sec:forward} outlines the proposed method.
Sections~\ref{sec:forward_error} and \ref{sec:backward_error} provide error analysis for, respectively, the forward and inverse problems.
Computational considerations are discussed in Section~\ref{sec:compute}.
Section~\ref{sec:applications} provides empirical results on the proposed approach, with discussion reserved for Section~\ref{sec:discussion}.

\section{Background} \label{context and motivation}

In this section we first formulate the notion of an inverse problem, then expand on the statistical motivation for valid inferences in the presence of discretisation error.

\subsection{Inverse Problems} \label{sec:setup}

The physical models considered here are defined via operator equations.
An inverse problem arises when some of these operators depend upon unknown parameters that must be inferred. 
This enables predictions to be obtained under the model, or provides insight into the physical system of interest.
The Bayesian approach treats unknown parameters as random variables; below we set up a mathematical framework that makes both the parameter dependence and the randomness explicit.

\subsubsection{Set-up and Notation} \label{sec:notation}

Consider a compact domain $D \subset \mathbb{R}^d$ with Lipschitz boundary $\partial D$.
Let $(\Omega,\mathcal{F}_\Omega,\mathbb{P}_\Omega)$ be a probability space and consider measurable operators $\mathcal{A} \colon \Omega \times H(D) \to H_{\mathcal{A}}(D)$ and $\mathcal{B} \colon \Omega \times H(D) \to H_{\mathcal{B}}(\partial D)$ among Hilbert spaces of functions $H(D)$, $H_{\mathcal{A}}(D)$, and $H_{\mathcal{B}}(\partial D)$.
Let $\bar{\mathcal{A}}$ denote the adjoint\footnote{In particular, when later in this paper the operator $\mathcal{A}$ is applied to a kernel function of two arguments, $\mathcal{A}$ refers to action on the first argument, while the adjoint $\bar{\mathcal{A}}$ refers to action on the second argument.} of $\mathcal{A}$.

Consider the stochastic solution $u(\quark,\omega) \in H(D)$, $\omega \in \Omega$, of operator equations of the form
\begin{alignat}{2}
	\mathcal{A}[\omega] u(\bm{x},\omega) &= g(\bm{x}) &\quad& \bm{x} \in D \nonumber \\
	\mathcal{B}[\omega] u(\bm{x},\omega) &= b(\bm{x}) &\quad& \bm{x} \in \partial D \label{eq:syst1}
\end{alignat}
where $g \in H_{\mathcal{A}}(D)$ and $b \in H_{\mathcal{B}}(\partial D)$.
The notation $\mathcal{A}[\omega]$ and $\mathcal{B}[\omega]$, $\omega \in \Omega$ is used to emphasise the random nature of the operators $\mathcal{A}$ and $\mathcal{B}$, perhaps as a result of the dependence on unknown parameters in a Bayesian approach, as will be described in the following section.
For concreteness, one can associate $\mathcal{A}$ with a PDE to be solved and $\mathcal{B}$ with any initial or boundary conditions. 
Similarly, $g \in H_{\mathcal{A}}(D)$ and $b \in H_{\mathcal{B}}(\partial D)$ can be considered as forcing and boundary terms for the PDE.

For notational simplicity we will generally restrict attention to systems with two operators as in Eq.~\eqref{eq:syst1}, however it is trivial to extend the algorithm of this paper to systems of more than two operators, each potentially restricted to subsets of $D$.

An inverse problem is one in which inferences are to be made for $\omega$, on the basis of possibly noisy observations of the underlying solution $u(\quark,\omega^\dagger)$, or derived quantities thereof, where $\omega^\dagger \in \Omega$ is the ``true'' value of $\omega$.
Typically both $\mathcal{A}$ and $\mathcal{B}$ depend on $\omega$ through some $\theta(\omega)$ that is of physical interest, where $\theta \colon \Omega \to \Theta$ is a measurable function mapping into a separable Banach space $\Theta$ with the Borel $\sigma$-algebra $\mathfrak{B}(\Theta)$.
The true value of $\theta$, denoted $\theta^\dagger = \theta(\omega^\dagger)$, is the object of statistical interest.

\subsubsection{The Bayesian Approach} \label{sec:bayesian_inverse_prob}

The Bayesian approach endows $(\Theta,\mathfrak{B}(\Theta))$ with a prior distribution $\Pi_{\theta}$.
The prior is updated on the basis of data, which in this paper refers to a random variable $\bm{y}$ with distribution $\Pi_{\bm{y}}$, defined on $(\mathcal{Y},\mathfrak{B}(\mathcal{Y}))$, where $\mathcal{Y} \subset \reals^n$ is equipped with the Borel $\sigma$-algebra $\mathfrak{B}(\mathcal{Y})$. 
The conditional density
\begin{equation*}
	\pi(\bm{y}|\theta) \propto \exp( -\Phi(\bm{y},\theta)), \label{finite data}
\end{equation*}
is called the \emph{likelihood};
$\Phi \colon \mathcal{Y} \times \Theta \to \mathbb{R}$ is a measurable function referred to variously as the \emph{potential}, or the \emph{negative log-likelihood} of $\bm{y}$ conditioned upon $\theta$.
An infinite-dimensional analogue of Bayes' theorem \cite[Theorem 1.1]{Dashti2014} implies the existence of a posterior distribution $\Pi_\theta^{\bm{y}}$ on $(\Theta, \mathfrak{B}(\Theta))$ that is absolutely continuous with respect to $\Pi_\theta$, with Radon--Nikod\'{y}m derivative
\begin{equation}
\frac{\wrt \Pi_\theta^{\bm{y}}}{\wrt \Pi_\theta}(\theta) = \frac{1}{Z} \exp(- \Phi(\bm{y} , \theta)) , \; \; \; Z = \int_\Theta \exp( - \Phi(\bm{y},\theta)) \Pi_\theta(\wrt \theta) \label{eq:bayes_first}
\end{equation}
whenever $Z > 0$.
In colloquial use in the applied mathematical context, the \emph{Bayesian inverse problem} (BIP) entails numerical computation of the posterior distribution $\Pi_\theta^{\bm{y}}$, or derived quantities thereof.

\subsection{Statistical Motivation: Valid Inference at Lower Cost} \label{sec:valid_inference}

We now expand on the motivation for developing a more expressive quantification of numerical error in the forward problem, within the context of the statistical inverse problem.
Consider for example the Gaussian measurement error model, with the potential
\begin{equation*}
\Phi(\bm{y},\theta) = \frac{1}{2} \| \bm{y} - \mathcal{G}(\theta) \|_\Gamma^2,
\end{equation*}
where $\mathcal{G} \colon \Theta \to \mathcal{Y}$ is a parameter-to-observable map and data $\bm{y} \in \mathcal{Y}$ are observations that have been collected.
Here the symmetric positive semi-definite $n \times n$ matrix $\Gamma$ defines an appropriate scaling of the residual vector and hence also a Cameron--Martin space \cite[Section~7.3]{Dashti2014}.

Typically an analytic representation for $\mathcal{G}(\theta)$ is unavailable, so that a numerical solver is used to obtain an approximation $\hat{\mathcal{G}}(\theta)$.
Inference then proceeds based on the approximate potential
\begin{equation*}
\hat{\Phi}(\bm{y},\theta) = \frac{1}{2} \| \bm{y} - \hat{\mathcal{G}}(\theta) \|_\Gamma^2
\end{equation*}
in place of the true potential $\Phi$.
For PDEs, the difference between $\hat{\Phi}$ and $\Phi$ can typically be driven to negligible values by running numerical methods on a detailed discretisation of the PDE \citep{Schwab2012}.
These techniques combine to produce statistically valid inferences on the parameter $\theta$ \citep{Stuart2010}.
However, the requirement to drive error to negligible values can have a high computational cost.

Instead, we propose here a novel approach based on probabilistic solution of the forward problem, in which error in the discretisation of the forward problem is captured \emph{statistically} and accounted for in inferences made in the BIP.
Capturing this error can permit an overall reduction in computation in some situations, by allowing use of a coarser discretisation while still yielding statistically valid inferences.

\section{Methods} \label{sec:forward}

In this section a probabilistic meshless method is formally defined.
The starting point is radial basis function collocation, as studied by \cite{Fasshauer1996} and more recently by \cite{Owhadi2015}. 
Initially it is assumed that the operators $\mathcal{A}$ and $\mathcal{B}$ are linear; this will be relaxed in Section~\ref{sec:nonlinear}. 
%An overview of the method described in this and the following sections is provided, for later reference, as a graphical model in Figure~\ref{fig:graph_model}.

\COMMENT{
\begin{figure}[t!]

\begin{subfigure}{\textwidth}
\centering
\resizebox{0.66\textwidth}{!}{
\begin{tikzpicture}

\tikzstyle{obs}=[draw,circle,fill = black!0,minimum width=1.2cm];
\tikzstyle{unobs}=[draw,circle,fill = black!20,minimum width=1.2cm];
\tikzstyle{arrow}=[very thick,->];

\node[unobs] at (0,0) (t) {$\theta$};
\node[unobs] at (9,0) (u) {$u$};
\node[obs] at (12,0) (y) {$\bm{y}$};

\path[arrow] (t) edge (u);
\path[arrow] (u) edge (y);

\end{tikzpicture}
}
\caption{Abstract inverse problem (no discretisation error)}
\end{subfigure}

\vspace{20pt}

\begin{subfigure}{\textwidth}
\centering
\resizebox{0.8\textwidth}{!}{
\begin{tikzpicture}

\tikzstyle{obs}=[draw,circle,fill = black!0,minimum width=1.2cm];
\tikzstyle{unobs}=[draw,circle,fill = black!20,minimum width=1.2cm];
\tikzstyle{arrow}=[very thick,->];
\tikzstyle{compartment} = [rectangle,draw=black, top color=white, bottom color=black!0,thick,rounded corners, inner sep=1cm, minimum size=1cm]

\node[unobs] at (0,0) (t) {$\theta$};
\node[obs] at (3,2) (X0) {$X_0$};
\node[obs] at (6,1) (g) {$\bm{g}$};
\node[obs] at (9,2) (b) {$\bm{b}$};
\node[unobs] at (9,0) (u) {$\bm{u}$};
\node[unobs] at (5,-2) (z) {$\bm{z}$};
\node[obs] at (12,0) (y) {$\bm{y}$};

\path[arrow] (t) edge (X0);
\path[arrow] (X0) edge (g);
\path[arrow] (X0) edge (b);
\path[arrow] (g) edge (u);
\path[arrow] (b) edge (u);
\path[arrow] (t) edge (u);
\path[arrow] (t) edge (z);
\path[arrow] (z) edge (u);
\path[arrow] (u) edge (y);

\node[rotate=90] at (-1.5,1) (e) {experimental design};
\node[rotate=90] at (-2,2) (e) {(Sec.~\ref{sec:experimental_design})};
\node at (11.7,2.8) (G) {unknown Green's};
\node at (11,2.4) (f) {function};
\node at (11.2,1.9) (f) {(Sec.~\ref{sec:pmm})};
\node at (7.4,-2.3) (n) {non-linearity};
\node at (7.1,-2.8) (n) {(Sec.~\ref{sec:nonlinear})};

\begin{pgfonlayer}{background}
\filldraw [compartment] (-1,-1)  rectangle (4,3);
\filldraw [compartment] (8,-1)  rectangle (10,3);
\filldraw [compartment] (4,-3)  rectangle (6,-1);
\end{pgfonlayer}
\end{tikzpicture}
}
\caption{Probabilistic meshless method applied to the inverse problem in (a)}
\end{subfigure}

\caption{Graphical model representation.
(Shaded nodes are unobserved.)
(a) The abstract inverse problem, where the exact solution $u$ can be obtained from the parameter $\theta$ and compared to observational data $\bm{y}$.
(b) The probabilistic meshless method applied to the inverse problem in (a). 
In this framework the solution vector $\bm{u}$ is no longer a deterministic function of the parameter $\theta$.
Instead, a probabilistic model for discretisation error is integrated into inference.
This aims to neutralise the inferential problems of bias and over-confidence that can result from neglecting discretisation error. 
The components $X_0$, $\bm{g}$ are defined in Section~\ref{sec:pmm}, $\bm{b}$ in Section~\ref{sec:posited_kernel} and $\bm{z}$ in Section~\ref{sec:nonlinear} of the main text. }
\label{fig:graph_model}
\end{figure}
}

\subsection{Probability Measures for Solutions of PDEs}

Let $(\pnspace,\mathcal{F}_\pnspace,\mathbb{P}_\pnspace)$ be a second probability space and consider a measurable function $g \colon D \times \pnspace \to \mathbb{R}$ such that, for each $\pnparam \in \pnspace$, $g(\quark,\pnparam) \in H_\mathcal{A}(D)$. Similarly consider a measurable function $b \colon \partial D \times \pnspace \to \mathbb{R}$ such that for each $\pnparam \in \pnspace$, $b(\quark,\pnparam) \in H_\mathcal{B}(\partial D)$.
The mathematical object studied in this section is the ``doubly stochastic'' solution $u(\quark,\omega,\pnparam) \in H(D)$, $\omega \in \Omega$, $\pnparam \in \pnspace$, of operator equations of the form
\begin{alignat}{2}
	\mathcal{A}[\omega] u(\bm{x},\omega,\pnparam) &= g(\bm{x},\pnparam) &\quad& \bm{x} \in D \nonumber \\
	\mathcal{B}[\omega] u(\bm{x},\omega,\pnparam) &= b(\bm{x},\pnparam) &\quad& \bm{x} \in \partial D. \label{eq:syst2}
\end{alignat}
The system in Eq.~\eqref{eq:syst2} is a stochastic relaxation of the original inverse problem in Eq.~\eqref{eq:syst1}, in the sense that the \emph{deterministic} forcing terms $g$ and $b$ are, for the purposes of exposition, formally considered as \emph{random fields}. 
This construction is justified in \cite{Owhadi2015} as a reflection of the epistemic uncertainty, from the perspective of the numerical solver, about the value of the forcing at locations where it has not been evaluated.

The doubly stochastic solution $u$ exists as a random variable that takes values $(\mathbb{P}_\Omega,\mathbb{P}_{\pnspace})$-almost surely in an appropriate function space; this will be made precise in Section~\ref{sec:existence_of_RV}.
The next section focuses on the effect of this randomisation and how it connects, at a deep level, to collocation methods.

\subsubsection{Probabilistic Meshless Method}
\label{sec:pmm}

Fasshauer \cite{Fasshauer1996, Fasshauer2011} observed that collocation methods based upon radial basis functions implicitly posit a reproducing kernel Hilbert space (RKHS) for the solution $u$, with a kernel $k$.
Here, similarly to \cite{Cialenco2012}, we extend this viewpoint by positing a Gaussian process prior for $u$ with covariance function $k$. 
That is, for fixed $\omega$, the map $\pnparam \mapsto u(\cdot, \omega, \pnparam)$ is a stochastic process whose finite dimensional marginals $[u(\bm{x}_1, \omega, \quark),\dots,u(\bm{x}_n, \omega, \quark)]$ have a Gaussian distribution for any $\set{\bm{x}_1, \dots, \bm{x}_n} \subset D$.
The mean vector and covariance matrix of these marginals are characterised by the mean function $m \colon D \to \mathbb{R}$ and covariance function $k \colon D \times D \to \mathbb{R}$ of the Gaussian process, which completely characterise the distribution $\Pi_u$. 
Throughout, the prior mean function is assumed to be zero; this assumption can be trivially relaxed.
Choice of the kernel $k$ will be discussed in Sections~\ref{sec:natural_space} and \ref{sec:posited_kernel}. 
For the remainder of Section \ref{sec:forward} we leave all dependence on $\omega$ and $\pnparam$ implicit.

Now, a posterior measure is constructed that represents epistemic uncertainty over the solution $u$ after expending a finite amount of computational effort.
To accomplish this, the prior measure $\Pi_u$ is conditioned on $m_{\mathcal{A}} \in \mathbb{N}$ evaluations of the forcing function at distinct locations $X_0^{\mathcal{A}} = \{\bm{x}_{0,j}^{\mathcal{A}}\}_{j=1}^{m_{\mathcal{A}}} \subset D$, and $m_{\mathcal{B}} \in \mathbb{N}$ evaluations of the boundary function at locations $X_0^{\mathcal{B}} = \set{\bm{x}_{0,j}^{\mathcal{B}}}_{j=1}^{m_{\mathcal{B}}} \subset \partial D$. These are referred to as the \emph{design} points.
The sought-for solution $u$ and the evaluations are related through the interpolation equations
\begin{align*}
	\mathcal{A} u(\bm{x}_{0,j}^{\mathcal{A}}) &= g(\bm{x}_{0,j}^{\mathcal{A}}), \; \; \; j = 1,\dots,m_{\mathcal{A}} \\
	\mathcal{B} u(\bm{x}_{0,j}^{\mathcal{B}}) &= b(\bm{x}_{0,j}^{\mathcal{B}}), \; \; \; j = 1,\dots,m_{\mathcal{B}}. \numberthis \label{eq:obs}
\end{align*}
Write $\bm{g}$ for the $m_{\mathcal{A}} \times 1$ vector with $j$\textsuperscript{th} element $g(\bm{x}_{0,j}^{\mathcal{A}})$ and $\bm{b}$ the respective vector for points in $X_0^{\mathcal{B}}$.
Then the conditional process $u|\bm{g}, \bm{b}$, denoted by $\Pi_u^{\bm{g}, \bm{b}}$, is also Gaussian and is characterised by its finite-dimensional marginals, given in Proposition~\ref{prop:forward_posterior}. To construct this posterior some notation must first be established.

For sets $X = \{\bm{x}_j\}_{j=1}^n$ and $X' = \{\bm{x}_j'\}_{j=1}^{n'}$, denote by $\bm{K}(X,X')$ the $n \times n'$ matrix whose $(i,j)$\textsuperscript{th} element is $k(\bm{x}_i,\bm{x}_j')$.
When $X = X'$ the notation $\bm{K}(X) = \bm{K}(X,X)$ is used.
The $n \times n'$ matrices $\mathcal{A} \bm{K}(X,X')$, $\bar{\mathcal{A}} \bm{K}(X,X')$ and $\mathcal{A} \bar{\mathcal{A}} \bm{K}(X,X')$ have respective $(i,j)$\textsuperscript{th} entries $\mathcal{A} k(\bm{x}_i,\bm{x}_j')$, $\bar{\mathcal{A}} k(\bm{x}_i,\bm{x}_j')$ and $\mathcal{A} \bar{\mathcal{A}} k(\bm{x}_i,\bm{x}_j')$. 
Define
\begin{equation*}
\mathcal{L} := \begin{bmatrix} \mathcal{A} \\ \mathcal{B} \end{bmatrix}, \; \; \; \bar{\mathcal{L}} := \begin{bmatrix} \bar{\mathcal{A}} & \bar{\mathcal{B}} \end{bmatrix} .
\end{equation*}
Introduce the $(m_{\mathcal{A}} + m_{\mathcal{B}}) \times (m_{\mathcal{A}} + m_{\mathcal{B}})$ matrix
\begin{align*}
  \mathcal{L} \bar{\mathcal{L}} \bm{K}(X_0) :=
    \begin{bmatrix}
      \mathcal{A}\bar{\mathcal{A}} \bm{K}(X_0^\mathcal{A}, X_0^\mathcal{A})
        & \mathcal{A}\bar{\mathcal{B}} \bm{K}(X_0^\mathcal{A}, X_0^\mathcal{B}) \\
      \bar{\mathcal{A}}\mathcal{B} \bm{K}(X_0^\mathcal{B}, X_0^\mathcal{A})
        & \mathcal{B}\bar{\mathcal{B}} \bm{K}(X_0^\mathcal{B}, X_0^\mathcal{B})
    \end{bmatrix}
\end{align*}
\noindent
and also the $1 \times (m_{\mathcal{A}} + m_{\mathcal{B}})$ vectors
\begin{align*}
  \bar{\mathcal{L}} \bm{K}(\bm{x},X_0) & :=
    \begin{bmatrix}
      \bar{\mathcal{A}} \bm{K}(\bm{x},X_0^\mathcal{A})
        \\ \bar{\mathcal{B}} \bm{K}(\bm{x},X_0^\mathcal{B})
    \end{bmatrix} &
  \mathcal{L} \bm{K}(\bm{x},X_0) & :=
    \begin{bmatrix}
      \mathcal{A} \bm{K}(\bm{x},X_0^\mathcal{A})
        \\ \mathcal{B} \bm{K}(\bm{x},X_0^\mathcal{B})
    \end{bmatrix} .
\end{align*}

\begin{proposition}[Probabilistic Meshless Method; PMM] \label{prop:forward_posterior}
Let $X = \{\bm{x}_j\}_{j=1}^n \subset D$.
	Denote by $\bm{u}$ the $n \times 1$ vector with $j$\textsuperscript{th} element $u(\bm{x}_j)$.
	Then under $\Pi_u^{\bm{g},\bm{b}}$ we have
	\[
		\bm{u} | \bm{g} , \bm{b} \sim N(\bm{\mu} , \bm{\Sigma})
	\]
	where the posterior mean and covariance are
	\begin{align}
		\bm{\mu} & = \bar{\mathcal{L}} \bm{K}(X,X_0) [\mathcal{L}\bar{\mathcal{L}}\bm{K}(X_0)]^{-1} [\bm{g}^\top  \; \bm{b}^\top ]^\top  \label{eq:full_posterior_mean} \\
		\bm{\Sigma} & = \bm{K}(X) - \bar{\mathcal{L}}\bm{K}(X,X_0) [ \mathcal{L}\bar{\mathcal{L}}\bm{K}(X_0)]^{-1} \mathcal{L}\bm{K}(X_0,X) . \label{eq:full_posterior_cov}
	\end{align}
\end{proposition}

This clarifies what constitutes a ``probabilistic'' numerical method; rather than returning only an approximation to $u$, a probabilistic solver returns a full distribution $\Pi_u^{\bm{g}, \bm{b}}$ where randomness represents uncertainty over the true values of $u$ due to having only evaluated $g$ and $b$ at a finite number of locations.
In Section~\ref{sec:forward_error} it is proven that this statistical quantification of uncertainty is valid.

It is convenient to express the pointwise conditional mean and variance as $\mu(\bm{x})$ and $\sigma(\bm{x})^2$ respectively; i.e.~as defined by Eqns. \eqref{eq:full_posterior_mean} and \eqref{eq:full_posterior_cov} with $X = \{\bm{x}\}$.
Then the expression presented here for $\mu(\bm{x})$ is identical to the meshless method known as \emph{symmetric collocation}, developed by \citep{Fasshauer1999}.
The probabilistic interpretation of symmetric collocation was previously noted in \cite{Cialenco2012}.
Compared to previous literature, the variance term $\sigma^2(\bm{x})$ will play a more central role in this work and will enable formal quantification of numerical error.
While not investigated here, meshless methods can be extended in several directions, including to multi-level methods \citep{Fasshauer1999}.

It remains to discuss the choice of prior measure $\Pi_u$. Two possible choices are presented in the following sections.

\subsubsection{A Natural Prior Measure} \label{sec:natural_space}

The construction described below follows \cite{Owhadi2015}, but operates on functional Hilbert spaces rather than spaces of generalised functions.
Therein the forcing $g$ is formally modelled as a Gaussian stochastic process defined on $H_{\mathcal{A}}(D)$.
The notation $\Pi_g = N(0,\Lambda)$ will be used. 
It will be assumed that there exist fixed linear integro-differential operators $\mathcal{A}_\Lambda$ and $\mathcal{B}_\Lambda$ such that 
\begin{alignat}{2}
	\mathcal{A}_\Lambda g(\bm{x}) &= \xi(\bm{x}) &\quad& \bm{x} \in D \nonumber \\
	\mathcal{B}_\Lambda g(\bm{x}) &= 0 &\quad& \bm{x} \in \partial D \label{eq:laplacian}
\end{alignat}
where $\xi$ is the standard white-noise process.
A common choice for $\mathcal{A}_\Lambda$ is the fractional Laplacian, in which case $\Lambda$ corresponds to a Mat\`{e}rn kernel.
A comprehensive background reference is \cite{Berlinet2011}.

The RKHS corresponding to the kernel $\Lambda$ is denoted $H_\Lambda(D)$.
Denote $\|u\|_2^2 = \int_D u(\bm{x})^2 \wrt \bm{x}$.
It will be assumed that $H_\Lambda(D) \subseteq H_{\mathcal{A}}(D)$.
By construction $H_\Lambda(D)$ contains all functions $g \colon D \to \mathbb{R}$ for which the norm $\|g\|_\Lambda := \|\mathcal{A}_\Lambda g\|_2$ is finite.
The fractional Laplacian choice for $\mathcal{A}_\Lambda$ implies that $H_\Lambda(D)$ is a standard Sobolev space $\mathbb{H}^{\alpha}(D)$ for some order $\alpha$.
An important property of this characterisation is that --- except in the trivial finite-dimensional case --- the Gaussian measure assigns zero mass to the RKHS, i.e.~$\Pi_g[H_\Lambda(D)] = 0$ \citep{Berlinet2011}.
This leads to some additional technical detail in Section~\ref{sec:posited_kernel}.

Next, uncertainty is formally is propagated from the forcing term to the solution of the PDE.
Define the inner product space $( H_{\textup{nat}}(D) , \inner{ \quark , \quark}_{\textup{nat}} )$ by
\begin{align*}
	H_{\textup{nat}}(D) & := \{v \in H(D) \; | \; \mathcal{A}_\Lambda \mathcal{A}v \in L^2(D), \; \mathcal{B}v = 0 \text{ on } \partial D \text{ and } \mathcal{B}_\Lambda \mathcal{A} v = 0 \text{ on } \partial D\} , \\
	\inner{u , v}_{\textup{nat}} & := \int_D [\mathcal{A}_\Lambda \mathcal{A} u(\bm{x})] [\mathcal{A}_\Lambda \mathcal{A} v(\bm{x})]  \wrt\bm{x} .
\end{align*}
Under this definition $\|u\|_{\textup{nat}}^2 := \langle u , u \rangle_{\textup{nat}} = \|g\|_\Lambda^2$.
Proposition~\ref{RKHS} below establishes that $H_\textup{nat}(D)$ is in fact an RKHS for an appropriate choice of kernel. 
Assume non-degeneracy, so that $\| v \|_\textup{nat} = 0$ if and only if $v=0$. Further, assume that the problem is well-posed, meaning that, for any $g \in H_\Lambda(D)$, there exists a unique solution $u \in H_\textup{nat}(D)$ to the system $\mathcal{A}u = g$.
To elicit the reproducing kernel, suppose that we have a Green's function $G$ satisfying
\begin{alignat}{2}
	\mathcal{A} G(\bm{x},\bm{x}') & = \delta(\bm{x} - \bm{x}') &\quad& \bm{x} \in D \nonumber \\
	\mathcal{B} G(\bm{x},\bm{x}') & = 0 &\quad& \bm{x} \in \partial D,
\end{alignat}
and define the \emph{natural kernel} $k_\textup{nat} \colon D \times D \to \mathbb{R}$ by
\begin{equation} \label{eq:k_definition}
	k_\textup{nat}(\bm{x},\bm{x}') := \int_D \int_D G(\bm{x},\bm{z}) G(\bm{x}',\bm{z}') \Lambda(\bm{z},\bm{z}') \wrt\bm{z} \wrt\bm{z}'.
\end{equation}

\begin{proposition} \label{RKHS}
	Assume that $\sup_{\bm{x} \in D} k_\textup{nat}(\bm{x},\bm{x}) < \infty$. Then $H_\textup{nat}(D)$ is a reproducing kernel Hilbert space and $k_\textup{nat}$ is its reproducing kernel.
\end{proposition}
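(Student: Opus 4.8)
The plan is to exploit the isometric correspondence between $H_k(D)$ and the RKHS $H_\Lambda(D)$ that is induced by the (linear, well-posed) operator $\mathcal{A}$ together with its Green's-function inverse, and then to transport the already-available reproducing structure of $H_\Lambda(D)$ across this isometry. I will use that $H_\Lambda(D)$ is itself an RKHS, with inner product $\inner{f,h}_\Lambda = \inner{\mathcal{A}_\Lambda f, \mathcal{A}_\Lambda h}_{L^2(D)}$, reproducing kernel $\Lambda$, and reproducing identity $\inner{f, \Lambda(\cdot,\bm{z})}_\Lambda = f(\bm{z})$. The entire argument hinges on the identity $\norm{u}_k^2 = \norm{g}_\Lambda^2$ already recorded in the text, where $g = \mathcal{A}u$.

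First I would make the solution map explicit. Define $T : H_\Lambda(D) \to H_k(D)$ by $(Tg)(\bm{x}) = \int_D G(\bm{x},\bm{z}) g(\bm{z})\,\mathrm{d}\bm{z}$, i.e.\ $T$ sends a forcing $g$ to the unique $u$ solving $\mathcal{A}u = g$ with $\mathcal{B}u = 0$. Well-posedness makes $T$ a bijection with inverse $\mathcal{A}$; that $T$ lands in and is onto $H_k(D)$ is checked against the three defining conditions of $H_k(D)$ (the requirement $\mathcal{A}_\Lambda \mathcal{A}u \in L^2$ becomes $\mathcal{A}_\Lambda g \in L^2$, and $\mathcal{B}_\Lambda \mathcal{A}u = 0$ becomes the homogeneous boundary condition defining $H_\Lambda$). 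The norm identity $\norm{Tg}_k = \norm{\mathcal{A}_\Lambda g}_{L^2} = \norm{g}_\Lambda$ shows $T$ is an isometry. Since $H_\Lambda(D)$ is complete and non-degeneracy guarantees that $\inner{\cdot,\cdot}_k$ is a genuine inner product, $H_k(D)$ inherits completeness through $T$ and is therefore a Hilbert space.

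Next I would identify the kernel. Applying $\mathcal{A}$ in the first argument of $k$ and using $\mathcal{A}G(\bm{x},\bm{x}') = \delta(\bm{x}-\bm{x}')$ collapses one integral, yielding $\mathcal{A}[k(\cdot,\bm{x}')] = \phi_{\bm{x}'}$, where $\phi_{\bm{x}'}(\bm{x}) := \int_D G(\bm{x}',\bm{z})\Lambda(\bm{x},\bm{z})\,\mathrm{d}\bm{z}$; equivalently $\phi_{\bm{x}'} = \int_D G(\bm{x}',\bm{z})\Lambda(\cdot,\bm{z})\,\mathrm{d}\bm{z}$ is a superposition of the sections $\Lambda(\cdot,\bm{z}) \in H_\Lambda(D)$, so $\phi_{\bm{x}'} \in H_\Lambda(D)$ and hence $k(\cdot,\bm{x}') = T\phi_{\bm{x}'} \in H_k(D)$; the conditions $\mathcal{B}k(\cdot,\bm{x}')=0$ and $\mathcal{B}_\Lambda\mathcal{A}k(\cdot,\bm{x}')=0$ follow from the corresponding homogeneous boundary conditions on $G$ and on the sections of $\Lambda$. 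The reproducing property is then a one-line computation:
\[
\inner{v, k(\cdot,\bm{x}')}_k = \inner{\mathcal{A}v, \mathcal{A}k(\cdot,\bm{x}')}_\Lambda = \inner{\mathcal{A}v, \phi_{\bm{x}'}}_\Lambda = \int_D G(\bm{x}',\bm{z})\,(\mathcal{A}v)(\bm{z})\,\mathrm{d}\bm{z} = v(\bm{x}'),
\]
the third equality being the $\Lambda$-reproducing identity pulled through the integral and the last the Green's-function representation of $v = T(\mathcal{A}v)$. Boundedness of evaluation, $\abs{v(\bm{x}')} \le \norm{v}_k\, k(\bm{x}',\bm{x}')^{1/2}$, then follows by Cauchy--Schwarz, so $H_k(D)$ is an RKHS with reproducing kernel $k$, the latter being symmetric and positive semidefinite since $k(\bm{x},\bm{x}') = \inner{\phi_{\bm{x}},\phi_{\bm{x}'}}_\Lambda$.

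The main obstacle is purely measure-theoretic: making rigorous the interchange of the Hilbert-space inner product with the integral over $D$, i.e.\ that $\bm{z} \mapsto G(\bm{x},\bm{z})\Lambda(\cdot,\bm{z})$ is Bochner integrable in $H_\Lambda(D)$ and that Fubini applies to the double integral defining $k$. This is precisely where the hypothesis $\int_D k(\bm{x},\bm{x})^{1/2}\,\mathrm{d}\bm{x} < \infty$ enters: together with $k(\bm{x},\bm{x}) = \norm{\phi_{\bm{x}}}_\Lambda^2$ and the evaluation bound it gives $\int_D \abs{v(\bm{x})}\,\mathrm{d}\bm{x} \le \norm{v}_k \int_D k(\bm{x},\bm{x})^{1/2}\,\mathrm{d}\bm{x} < \infty$, so that $H_k(D) \subseteq L^1(D)$ and every element is a bona fide pointwise-defined function for which the manipulations above are legitimate. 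I would discharge this Bochner-integrability step first, since all the kernel identities are downstream of it.
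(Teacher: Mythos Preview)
Your proposal is correct and follows essentially the same route as the paper: both verify the reproducing identity by writing $\inner{v,k(\cdot,\bm{x})}_k$ as an $H_\Lambda$-pairing of $\mathcal{A}v$ with $\mathcal{A}k(\cdot,\bm{x})$, collapsing one Green's-function integral, and then invoking the reproducing property of $\Lambda$ (the paper does this last step via an explicit $G_\Lambda$-representation of $\Lambda$, whereas you cite the $\Lambda$-reproducing identity directly), before closing with Cauchy--Schwarz for boundedness of evaluation. Your framing via the isometry $T$ and your explicit discussion of completeness and of where the hypothesis $\int_D k(\bm{x},\bm{x})^{1/2}\,\mathrm{d}\bm{x}<\infty$ enters are welcome additions that the paper's proof leaves implicit.
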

The relationship between Green's functions and kernels is explored in detail by \cite{Fasshauer2011}.
The kernel $k_\textup{nat}$ is indeed natural, in the sense that $H_\Lambda(D)$ is the image under $\mathcal{A}$ of $H_\textup{nat}(D)$.
In the linear case, a realisation of $g$ corresponds to a unique realisation of $u$ and the randomness $\omega \in \Omega$ implies a reference measure $\Pi_u$ over $H(D)$.
Indeed, we have the following:
\begin{proposition} \label{noise model}
	$\Pi_g$ is a mean-zero Gaussian process with covariance function $\Lambda$ if and only if $\Pi_u$ is a mean-zero Gaussian process with covariance function $k_\textup{nat}$.
\end{proposition}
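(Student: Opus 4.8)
The plan is to observe that, under the linearity and well-posedness assumptions already in force, the solution $u$ and the forcing $g$ are related by a pair of mutually inverse linear maps: the Green's operator $(\mathcal{G}g)(\bm{x}) := \int_D G(\bm{x},\bm{z}) g(\bm{z}) \, \mathrm{d}\bm{z}$, sending $g \mapsto u$, and the differential operator $\mathcal{A}$, sending $u \mapsto g$. Since a Gaussian process is determined by the joint normality of its finite-dimensional marginals, and a linear image of a jointly Gaussian vector is jointly Gaussian, each implication amounts to pushing a centred Gaussian measure through one of these linear maps and tracking the induced covariance.

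For the forward direction ($\Pi_g \Rightarrow \Pi_u$), first I would argue that for any finite set $\bm{x}_1,\dots,\bm{x}_n$ the vector $(u(\bm{x}_1),\dots,u(\bm{x}_n))$ is an $L^2(\mathbb{P}')$-limit of Riemann sums, each of which is a linear combination of the jointly Gaussian values of $g$; since an $L^2$-limit of Gaussians is Gaussian, $u$ is Gaussian, and centredness is immediate from linearity. The covariance is then computed by Fubini's theorem,
\begin{align*}
	\mathrm{Cov}(u(\bm{x}),u(\bm{x}')) &= \int_D \int_D G(\bm{x},\bm{z}) G(\bm{x}',\bm{z}') \, \mathbb{E}[g(\bm{z})g(\bm{z}')] \, \mathrm{d}\bm{z}\,\mathrm{d}\bm{z}' \\
	&= \int_D \int_D G(\bm{x},\bm{z}) G(\bm{x}',\bm{z}') \Lambda(\bm{z},\bm{z}') \, \mathrm{d}\bm{z}\,\mathrm{d}\bm{z}' = k(\bm{x},\bm{x}'),
\end{align*}
which is exactly the definition of the natural kernel in Eqn.~\ref{eq:k_definition}. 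The interchange of expectation and integration is licensed by the integrability hypothesis $\int_D k(\bm{x},\bm{x})^{1/2}\,\mathrm{d}\bm{x} < \infty$ of Prop.~\ref{RKHS}, which controls the second moments of $u$.

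For the reverse direction ($\Pi_u \Rightarrow \Pi_g$) I would apply $\mathcal{A}$ to $u$, recovering $g = \mathcal{A}u$; linearity again delivers centredness and Gaussianity at the level of finite-dimensional marginals. The kernel is obtained by acting with $\mathcal{A}$ on the first argument of $k$ and with $\bar{\mathcal{A}}$ on the second, invoking the Green's-function identity $\mathcal{A}G(\bm{x},\bm{z}) = \delta(\bm{x}-\bm{z})$ twice to collapse the double integral and yield $\mathcal{A}\bar{\mathcal{A}}k(\bm{x},\bm{x}') = \Lambda(\bm{x},\bm{x}')$.

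The main obstacle will be the infinite-dimensional bookkeeping rather than the algebra: one must verify that $\mathcal{G}$ genuinely transports the reference Gaussian measure $\Pi_g$ to a bona fide Gaussian measure $\Pi_u$ on $H(D)$, not merely to a consistent family of finite-dimensional marginals, and that $\mathcal{A}$ may be applied in a suitable weak sense given that $\Pi_g$ assigns zero mass to the RKHS $H_\Lambda(D)$, as flagged in the text. The cleanest route is to phrase both implications through finite-dimensional marginals and the associated covariance operators, so that every use of $\mathcal{A}$, $\bar{\mathcal{A}}$ and $\mathcal{G}$ reduces to the kernel identities above rather than to pathwise differentiation of rough Gaussian sample paths.
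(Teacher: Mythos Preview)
Your proposal is correct and follows essentially the same approach as the paper: represent $u$ via the Green's operator $u(\bm{x}) = \int_D G(\bm{x},\bm{z}) g(\bm{z})\,\mathrm{d}\bm{z}$, push the expectation through via Fubini, and identify the resulting covariance as $k(\bm{x},\bm{x}')$. Your treatment is in fact more complete than the paper's --- the paper computes only the forward-direction covariance and appeals to Prop.~\ref{RKHS} for well-definedness, whereas you also supply the $L^2$-limit argument for Gaussianity and handle the reverse implication explicitly via $\mathcal{A}\bar{\mathcal{A}}k = \Lambda$.
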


In practice one can specify either the form of $\Lambda$ or the form of $k_\textup{nat}$, since in the linear case each fully determines the other.

Of note is that, when the natural kernel is chosen for the prior covariance in the previous section, the boundary conditions are encoded in the prior. 
As a result, in Proposition~\ref{prop:forward_posterior}, collocation points on the boundary can be omitted which simplifies Eqns \ref{eq:full_posterior_mean} and \ref{eq:full_posterior_cov} somewhat, though the construction is otherwise identical.

\subsubsection{A Practical Prior Measure} \label{sec:posited_kernel}

The presentation above assumes access to a Green's function for the PDE.
In practice Green's functions are not generally available for nontrivial PDE systems. Furthermore the integral in \eqref{eq:k_definition} poses a problem even when the Green's function is accessible. Thus in the general setting an alternative choice of covariance function must be used.

It is often straightforward to elicit a kernel $\tilde{k}$ such that $H_\textup{nat}(D)$ is embedded in $H_{\tilde{k}}(D)$.
A Hilbert space $H$ is said to be (\emph{continuously}) \emph{embedded} in another Hilbert space $H'$ if $H \subseteq H'$ and there is a constant $0 < c < \infty$ such that $\|u\|_{H'} \leq c \|u\|_H$ for all $u \in H$ \citep{Pillai2007}.
In the notation of Section~\ref{sec:notation}, set $H(D) = H_{\tilde{k}}(D)$, an RKHS with reproducing kernel $\tilde{k}$.
In this paper $\tilde{k}$ will often be a kernel whose native space is a Sobolev space, such as a Mat\`{e}rn or Wendland kernel, chosen on a PDE-theoretic basis to ensure that the true solution to the PDE lies in this native space.
The order of this space can be chosen by ``derivative counting'', to reflect the number of (weak) derivatives that $u$ is believed to have based on the maximum differential order of operators in the system.

Due to the aforementioned technicality that $\mathbb{P}_{\pnspace}$ is not supported on the RKHS, the kernel used for the prior measure on $u$ will not be $\tilde{k}$ but instead will be a new kernel $\hat{k}$, whose corresponding Gaussian measure has support on the RKHS $H(D)$.
In particular, it will be required that $H_{\hat{k}}(D)$ is embedded in $H(D)$ and that $\hat{k}$ satisfy the following properties:
\begin{enumerate}[(i)]
\item the measure $\Pi_u = N(0,\hat{k})$ satisfies $\Pi_u[H(D)] = 1$;
\item the set $H_{\hat{k}}(D)$ is dense in the space $(H(D),\|\quark\|_{\tilde{k}})$.
\end{enumerate}
These conditions enable any function $u \in H_k(D) \subseteq H(D)$ to be ``inferred'' from data, under a prior $\Pi_u$.
One choice of kernel that satisfies both (i) and (ii), suggested by \cite[Lemma~2.2]{Cialenco2012}, is the following integral-type kernel $\hat{k}$:

\begin{proposition} \label{prop:integral_type_kernel}
For any  $\tilde{k}$ such that $H_\textup{nat}(D)$ is embedded in $H_{\tilde{k}}(D)$, the kernel 
\begin{equation*}
		\hat{k}(\bm{x}, \bm{x}^\prime) := \int_D \tilde{k}(\bm{x}, \bm{z}) \tilde{k}(\bm{z}, \bm{x}^\prime) \wrt\bm{z}
\end{equation*}
satisfies requirements (i) and (ii) above.
\end{proposition}

\subsubsection{Illustrative Example: Forward Problem} \label{sec:illustrative_forward}

To illustrate these ideas, we examine the above procedure for Poisson's equation in one dimension. Consider the toy system
\begin{alignat*}{2}
	-\laplacian u(x) &= g(x) & \quad \text{for} \; & x \in (0,1) \\
	u(x) &= 0 &\quad \text{for} \; & x \in \{ 0, 1 \} ,
\end{alignat*}
for which the Green's function can be computed:
\begin{align*}
	G(x, x') &= \left\{\begin{array}{rl}
		x(x'-1) & \quad \text{for} \; x > x' \\
		x'(x-1) & \quad \text{for} \; x < x' .
	\end{array}\right.
\end{align*}
Place a Gaussian measure on the forcing term $g$, using the compactly supported polynomial kernel of \cite{Wendland1995}:
\begin{align*}
	\Lambda(x, x') &= \max(1 - \epsilon^{-1} | x-x' |, 0)^2 
\end{align*}
where $\epsilon$ is a parameter which controls the support of the kernel, so that  $\Lambda$ has support wherever $|x-x'| < \epsilon$. 
Samples from the prior $\Pi_g$ will be continuous, but do not have continuous derivatives.
Associating operators in the above system with the abstract formulation, we have $\mathcal{A} := -\laplacian = -\frac{\wrt^2}{\wrt x^2}$, while $\bar{\mathcal{A}} = -\frac{\wrt^2}{\wrt x'^2}$.  
The natural kernel
\begin{align*}
	k_\textup{nat}(x, x') = \int_0^1 \int_0^1 G(x,z) G(x', z') \Lambda(z, z') \wrt z \wrt z'
\end{align*}
is available in closed form since $G$ and $\Lambda$ are each piecewise polynomial. 
%Exploiting properties of Green's functions we can also find:
%\begin{align*}
%	\mathcal{A} k_\textup{nat}(x, x') &= \int_0^1 G(x', z') \Lambda(x, z') \wrt z \wrt z' \\
%	\bar{\mathcal{A}} k_\textup{nat}(x, x') &= \int_0^1 G(x, z) \Lambda(z, x') \wrt z \wrt z' \\
%	\mathcal{A} \bar{\mathcal{A}}k_\textup{nat}(x, x') &= \Lambda(x, x')
%\end{align*}
%which are similarly available in closed-form.

Next, a set of design points $\set{x_i}$ must be selected. 
For illustration take $m_{\mathcal{A}} = 39$ function evaluations at evenly spaced points in $(0,1)$.
In Figure~\ref{fig:running_example_natural} the conditional mean is plotted for the above PDE with $g(x) = \sin(2\pi x)$, along with sample paths from the full conditional measure.
The covariance $\Lambda(x,x')$ is assigned a support of $\epsilon = 0.4$. This is contrasted with the closed-form solution $u(x) = (2\pi)^{-2} \sin(2\pi x)$.

Even for this most simple of examples, computation of the natural kernel is challenging.
In practice collocation methods operate using a kernel such as $\hat{k}$ as given in Section~\ref{sec:posited_kernel}, or even by directly positing a kernel.
In Figure~\ref{fig:running_example_integral} the performance of the natural kernel $k_\textup{nat}$ is contrasted with that of $\hat{k}$, computed from a higher-order Wendland covariance function:
\begin{equation*}
	\tilde{k}(x, x') = \max(1- \epsilon^{-1} |x-x'|, 0)^4 \cdot (4\epsilon^{-1} |x-x'| + 1) ,
\end{equation*}
with $\hat{k}$ computed from $\tilde{k}$ as in Proposition~\ref{prop:integral_type_kernel}. This kernel corresponds to the number of derivatives implied by the Laplacian, as $\tilde{k}$ is twice differentiable at the origin.
The design is also augmented with $X_0^{\mathcal{B}} = \{0,1\}$ so that all samples from the conditional measure satisfy the boundary conditions.

Figure~\ref{fig:kernel_convergence} shows convergence of the conditional measures $\Pi_u^{\bm{g}}$ and $\Pi_u^{\bm{g},\bm{b}}$ based on these two kernels as the number of design points is increased. The advantage of using the natural kernel is a reduction in approximation error, though not to the extent of an appreciable change in the convergence rate;
this is to be expected, since both kernels have the same native Sobolev space.
Furthermore, for the natural kernel, the uncertainty in the posterior distribution appears to be more representative of how well the mean function approximates the truth.
A discussion on calibration of the kernel is reserved for Section \ref{sec:calib}.

\begin{figure}
	\centering
	\begin{subfigure}[t]{0.45\textwidth}
		\includegraphics[width=\textwidth]{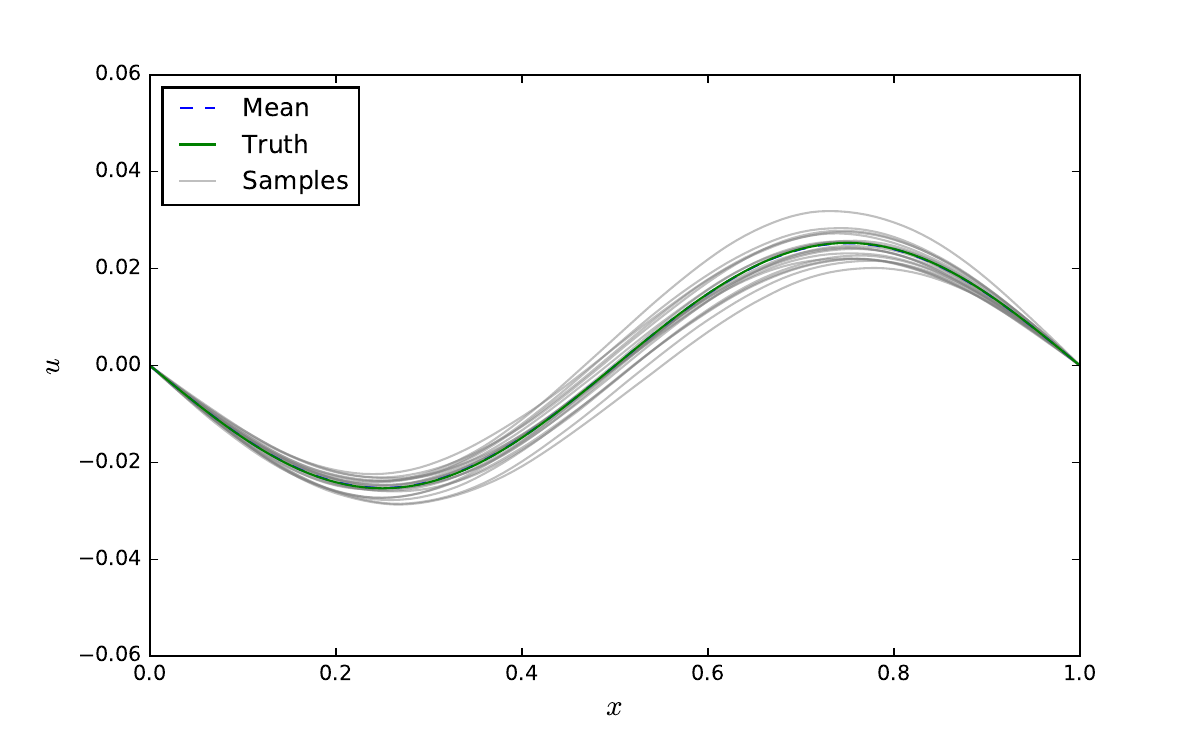}
		\caption{$\Pi_u^{\bm{g}}$, based on $k$} \label{fig:running_example_natural}
	\end{subfigure}
	~
	\begin{subfigure}[t]{0.45\textwidth}
		\includegraphics[width=\textwidth]{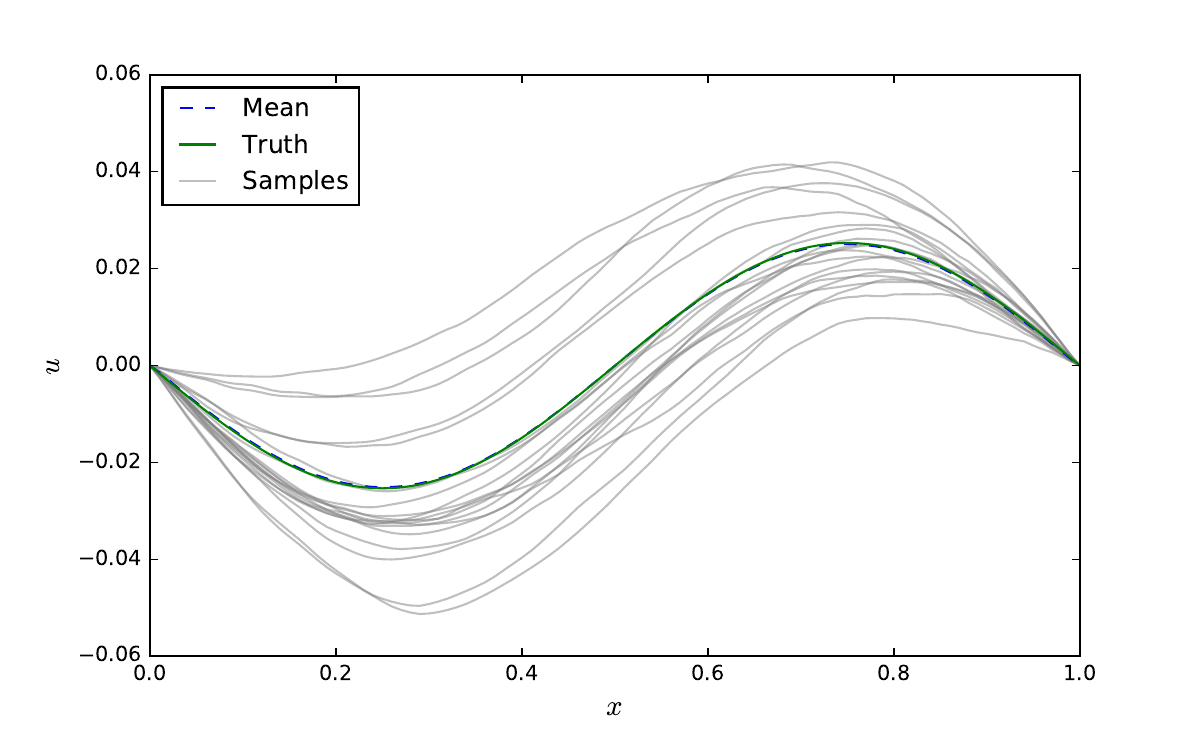}
		\caption{$\Pi_u^{\bm{g},\bm{b}}$, based on $\hat{k}$} \label{fig:running_example_integral}
	\end{subfigure}
	\caption{Probabilistic meshless methods: Comparison of conditional distributions (a) $\Pi_u^{\bm{g}}$ based on the natural kernel $k$ and (b) $\Pi_u^{\bm{g},\bm{b}}$ based on the integrated Wendland kernel $\hat{k}$. 
	In (b) two additional evaluations are performed at $x=0$ and $x=1$ to enforce the boundary conditions.}
	\label{fig:running_example}
\end{figure}

\begin{figure}
	\centering
	\begin{subfigure}[t]{0.45\textwidth}
		\includegraphics[width=\textwidth]{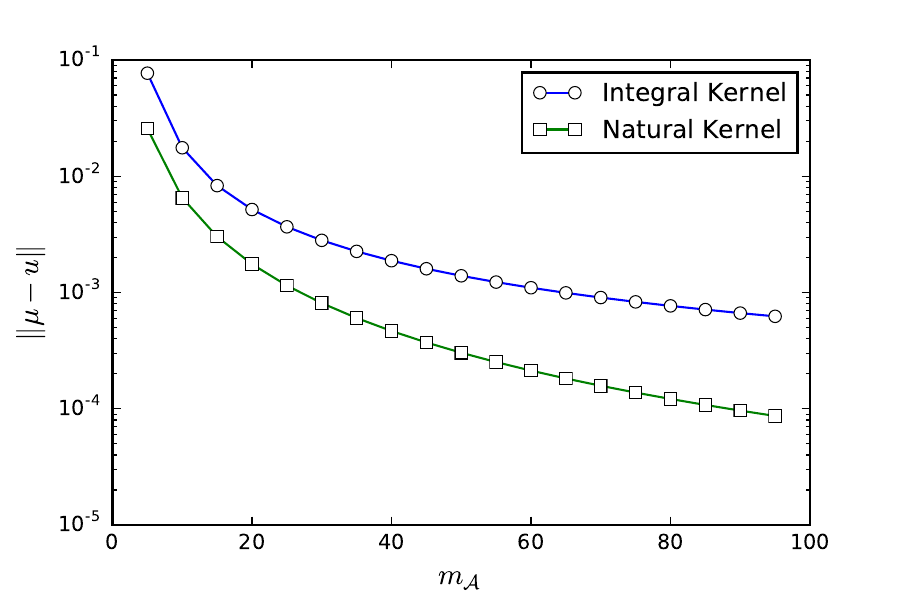}
		\caption{Error in conditional mean, $\|\mu - u\|_2$}
	\end{subfigure}
	~
	\begin{subfigure}[t]{0.45\textwidth}
		\includegraphics[width=\textwidth]{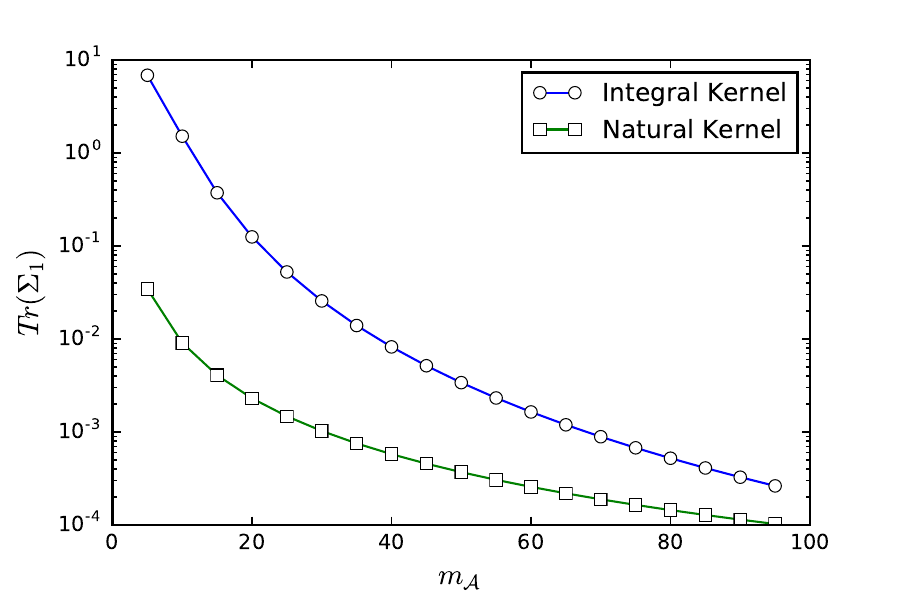}
		\caption{Residual uncertainty, $\|\sigma^2\|_1$}
	\end{subfigure}
	\caption{Probabilistic meshless methods: Convergence of mean and covariance as the number $m_{\mathcal{A}}$ of design points is increased. Values of $\mu$ and $\sigma$ were computed on a fine grid.}
	\label{fig:kernel_convergence}
\end{figure}

\subsection{The Inverse Problem}

Measurement data $\bm{y}$ are linked to the solution $u$ of the PDE through a likelihood $\pi(\bm{y} | u)$.
For linear PDEs with Gaussian additive noise $y_i = u(\bm{x}_i) + \epsilon_i$, $i = 1,\dots,n$, $\bm{\epsilon} \sim N(\bm{0} , \bm{\Gamma})$, the likelihood has the closed-form expression under marginalisation of the distributional output of the PNM:
\begin{align}
& \pi(\bm{y}|\bm{g},\bm{b},\theta) = \int \pi(\bm{y}|u) \wrt \Pi_u^{\bm{g},\bm{b}} \notag \\
& \quad = \frac{1}{\sqrt{\text{det}[2\pi(\bm{\Sigma}(\theta) + \bm{\Gamma})]}} \exp\left\{-\frac{1}{2}(\bm{y} - \bm{\mu}(\theta))^\top  (\bm{\Sigma}(\theta) + \bm{\Gamma})^{-1}(\bm{y} - \bm{\mu}(\theta))\right\}, \label{eq:marginalise}
\end{align}
where $\bm{\mu}$, $\bm{\Sigma}$ are given in Eq.~\eqref{eq:full_posterior_mean} and Eq.~\eqref{eq:full_posterior_cov}, and their dependence on the parameter $\theta$ has been emphasised.
In Section~\ref{sec:inverse_consistency} we prove that as the design $X_0$ is refined $\pi(\bm{y} | \bm{g}, \bm{b} , \theta)$ converges to the abstract likelihood $\pi(\bm{y} | \theta)$, and also prove convergence of the implied posterior distributions in an appropriate probability metric.

\subsubsection{Illustrative Example: Inverse Problem}

Returning to the example of Section~\ref{sec:illustrative_forward}, we illustrate the effect of probabilistic solution of the forward problem in terms of the inferences being made on $\theta$.
Consider the problem of estimating $\theta$ in the following system:
\begin{alignat*}{2}
	-\nabla \cdot (\theta \nabla u(x)) &= g(x) & \quad \text{for} \; & x \in (0,1) \numberthis \label{eq:illustrative_laplacian}\\
	u &= 0 &\quad \text{for} \;& x \in \{ 0, 1 \} .
\end{alignat*}
Again, take $g(x) = \sin(2\pi x)$.
Observed data for the inverse problem was generated with parameter $\theta^{\dagger} = 1$, at the locations $x=0.25$ and $x=0.75$, by evaluating the explicit solution $u(x) = (2\pi)^{-2}\sin(2\pi x)$ and corrupting these observations with additive zero-mean Gaussian noise with covariance $\bm{\Gamma} = 0.001^2 \bm{I}$.

To illustrate the advantage of a probabilistic solution to the PDE, posteriors were computed based on the standard approach of symmetric collocation --- which ignores discretisation error and replaces $\bm{u}$ with a numerical estimate --- and the PMM with data-likelihood as given in Eq.~\eqref{eq:marginalise}.
These are also contrasted with an approach in the data-likelihood covariance is inflated using a classical error estimate for symmetric collocation, as given in Proposition~\ref{prop:natural_local_accuracy}.
In this case, the data-likelihood is given by:
\begin{align*}
	\pi_{\text{diag}}(\bm{y} | \bm{g}, \bm{b}, \theta) &= \frac{1}{Z(\theta)} \exp\left\{-\frac{1}{2}(\bm{y} - \bm{\mu}(\theta))^\top  (\diag(\bm{\Sigma}(\theta)) + \bm{\Gamma})^{-1}(\bm{y} - \bm{\mu}(\theta))\right\} \\
	Z(\theta) &= \sqrt{\text{det}[2\pi(\diag(\bm{\Sigma}(\theta)) + \bm{\Gamma})]} \numberthis\label{eq:perturbed_likelihood}
\end{align*}
where $\diag(\bm{A})$ denotes the matrix whose diagonal entries are the diagonal entries of $\bm{A}$, and whose off-diagonal entries are zero.

The parameter $\theta$ was endowed with a standard log-Gaussian prior $\Pi_{\theta}$ to ensure positivity.
Figure~\ref{fig:example_posterior} shows the posteriors, while Figure~\ref{fig:example_posterior_convergence} shows convergence as the number of design points is increased. 

\begin{figure}
	\centering
	\begin{subfigure}{0.8\textwidth}
		\includegraphics[width=\textwidth]{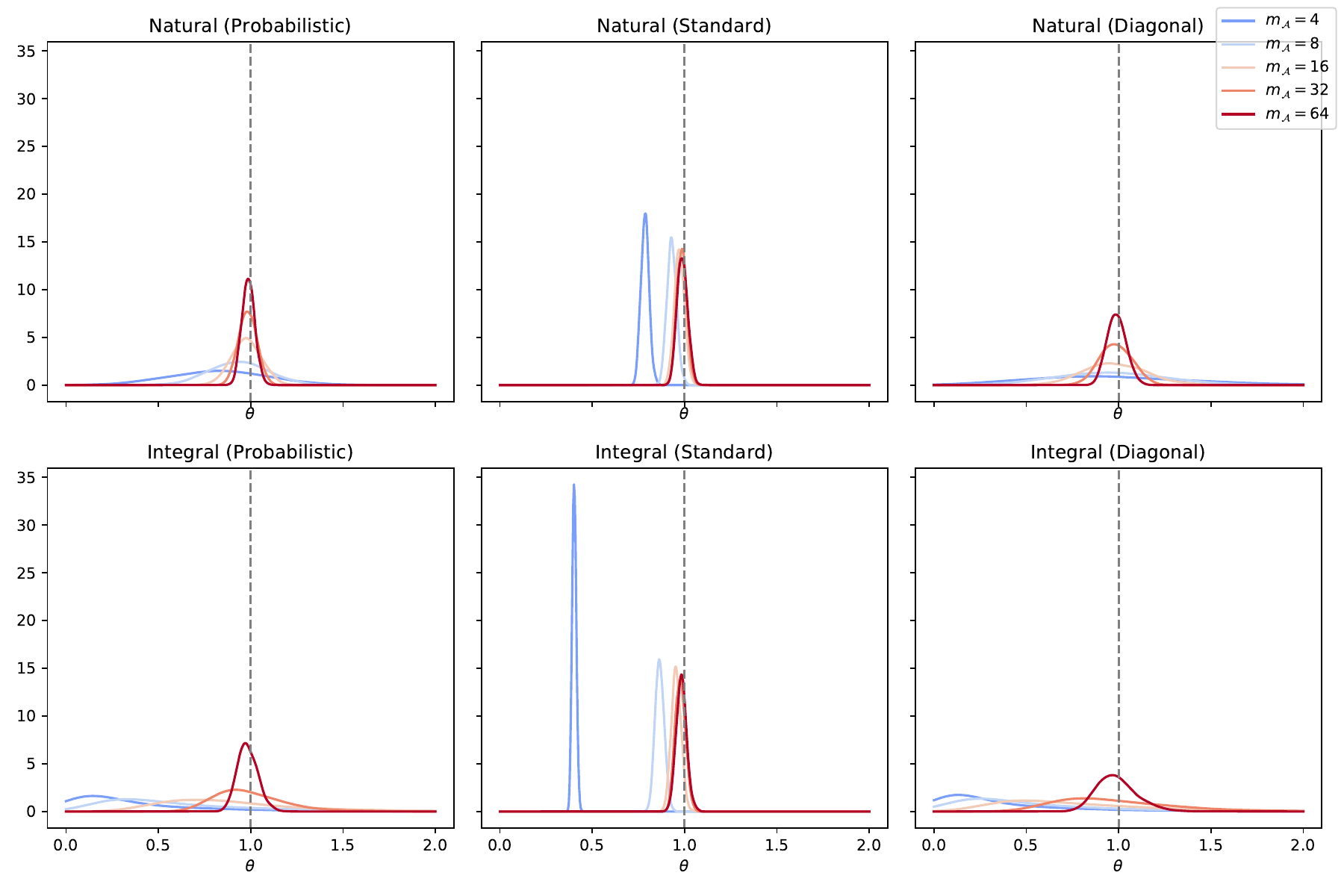}
		\caption{Posterior distributions $\Pi_\theta^{\bm{y}}$}
		\label{fig:example_posterior}
	\end{subfigure}
	\\
	\begin{subfigure}{0.8\textwidth}
		\includegraphics[width=\textwidth]{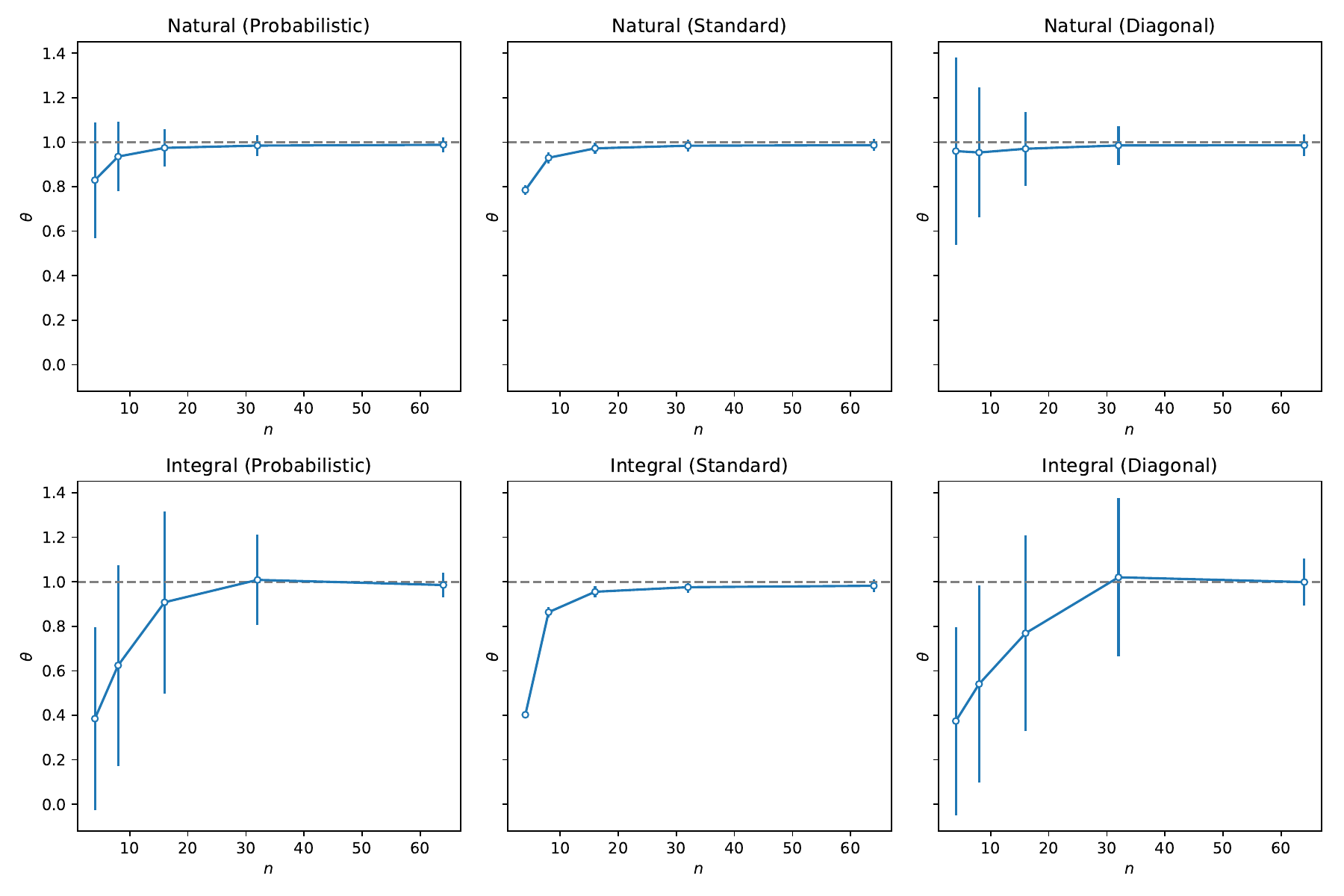}
		\caption{Posterior credible intervals}
		\label{fig:example_posterior_convergence}
	\end{subfigure}
\caption{PMM solutions to the inverse problem versus the standard approach that plugs a discrete approximation to the exact PDE solution into the likelihood, and a conservative approach which inflates the data-likelihood covariance by a classical error estimate for the forward problem.
The latter leads to over-confidence since uncertainty due to discretisation of the original PDE is ignored. 
(a) Posterior distributions $\Pi_\theta^{\bm{y}}$ as a function of the number $m_{\mathcal{A}}$ of design points.
(b) One standard deviation posterior credible intervals for $\theta$, again as a function of $m_{\mathcal{A}}$.
In each case the natural kernel $k_\textup{nat}$ is compared to the integral-type kernel $\hat{k}$, as explained in the main text.}
\end{figure}

This example highlights a shortcoming of the standard approach. The posterior variance in Figure~\ref{fig:example_posterior} (middle) is constant, independent of the number $m_{\mathcal{A}}$ of collocation basis functions, and posterior (1 s.d.) credible intervals do not cover the true value $\theta^{\dagger} = 1$ when $m_{\mathcal{A}} \leq 20$. 
In contrast, when using the PMM in Figure~\ref{fig:example_posterior} (left) there is a clear widening in the posterior for small $m_{\mathcal{A}}$, and in general the true value of $\theta$ is within a standard deviation of the posterior mode. The same is true of the posteriors in Figure~\ref{fig:example_posterior} (right), but here the inflation of the posterior is significantly wider, and the credible intervals are far more pessimistic, owing to the fact that the covariance between errors at the observation locations is ignored.

Note that, while the kernel $\hat{k}$ is independent of the value of $\theta$ in this problem, the natural kernel depends on $\theta$ through the Green's function via $k_\textup{nat}(x, x'; \theta) = \theta^{-2} k_\textup{nat}(x, x'; 1)$.
This dependence could be removed by simply dividing Eq.~\eqref{eq:illustrative_laplacian} by $\theta$, but is emphasised here to highlight theoretical considerations in Section~\ref{sec:backward_error}, in which assumption (A2) is on the strength of dependence of $k$ on $\theta$.
Note also that posterior variance is reduced when using the natural kernel $k_\textup{nat}$, compared with the integral kernel $\hat{k}$. This is to be expected considering the reduced variance exhibited in Section~\ref{sec:illustrative_forward}.

\subsubsection{Calibration of Kernel Parameters} \label{sec:calib}

The requirement to posit a kernel $\tilde{k}$ typically introduces nuisance parameters.
This issue has so far received little attention in the literature on meshless methods (e.g. \citep{Xu2017}), but is crucial to this work since the choice of parameters directly influences the spread of the probability model for numerical error.

The problem of selecting kernel parameters appears regularly in PNMs \citep{Briol2016, Conrad2015, Kersting2016}.
One approach to selection of kernel parameters would simply be to maximise the likelihood of the data, i.e.~to maximise Eq.~\eqref{eq:marginalise} over all nuisance parameters.
This is known in statistics as ``empirical Bayes'', but other approaches are possible, including cross-validation; see the discussion in \cite{Briol2016}.
Application of this approach in the context of inverse problems is difficult as the optimal kernel parameters will often be dependent upon the parameter $\theta$.
A sampling strategy for exploring posterior distributions over $\theta$ may perform poorly if kernel parameters are estimated by empirical Bayes based upon a poor initial guess for $\theta$.

An alternative approach is to consider the nuisance parameters of the kernel as additional parameters to be inferred in a hierarchical approach. This allows marginalisation of those parameters so that inferences do not depend upon point estimates, and is the approach employed in the applications in Section~\ref{sec:applications}.

\section{Theoretical Results}

The mathematical properties of the PMM method are now established.

\subsection{Error Analysis for the Forward Problem} \label{sec:forward_error}

First, we present error analysis for the forward problem where the parameter $\omega \in \Omega$ describing the differential operator is fixed. In this section we assume that the kernel in use is the kernel $\hat{k}$ of Section~\ref{sec:posited_kernel}.

Denote by $u_\omega = u(\quark, \omega, \zeta^\dagger)$ the solution of the PDE for a particular value of $\omega$, and let $u^\dagger = u_{\omega^\dagger} = u(\quark,\omega^\dagger,\zeta^\dagger)$ be the true solution to the PDE for the forcing and boundary functions $g(\quark,\zeta^\dagger)$, $b(\quark,\zeta^\dagger)$.
Here $u_\omega$ can be thought of as the solution to the PDE for a fixed value of the parameter $\theta = \theta(\omega)$, while $u^\dagger$ is the true solution to the PDE when the true value of the parameter $\theta^\dagger = \theta(\omega^\dagger)$ is used.

Two Hilbert spaces $H$, $H'$ are said to be \emph{norm-equivalent} when each is continuously embedded in the other.
We denote equivalence of $H$ and $H'$ by $H \equiv H'$.
We will work under the following assumption:
\begin{enumerate}
	\item[(A1)] Suppose that $H(D)$ is norm-equivalent to the Sobolev space $\mathbb{H}^\beta(D)$ of order $\beta > d/2$, with norm denoted by $\|\quark\|_{\mathbb{H}^\beta(D)}$.
\end{enumerate}
This can be satisfied by construction since we are free to select the kernel $\tilde{k}$.
It is implicitly assumed that the differential orders\footnote{i.e.~number of derivatives} of $\mathcal{A}$ and $\mathcal{B}$ are $O(\mathcal{A}), O(\mathcal{B}) < \beta - d/2$, so that the stochastic processes $\mathcal{A} u$ and $\mathcal{B} u$ are well-defined.

The analysis below is rooted in a dual relationship between the posterior variance and the worst-case error:
\begin{proposition}[Local accuracy] \label{prop:natural_local_accuracy}
For all $\bm{x} \in D$ we have $| \mu_{\omega}(\bm{x}) - u_\omega(\bm{x}) | \leq \sigma_\omega(\bm{x}) \norm{u_\omega}_{\hat{k}} $.
\end{proposition}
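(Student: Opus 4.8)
The plan is to recognise this as the classical ``power function'' bound from reproducing kernel interpolation theory, adapted to generalised (differential-operator) data. First I would collect the $m := m_{\mathcal{A}} + m_{\mathcal{B}}$ observation functionals into a single family $L_1, \dots, L_m$, where $L_i u = \mathcal{A} u(\bm{x}_{0,i}^{\mathcal{A}})$ for the interior points and $L_i u = \mathcal{B} u(\bm{x}_{0,i}^{\mathcal{B}})$ for the boundary points, together with the point-evaluation functional $\delta_{\bm{x}} u = u(\bm{x})$. Under (A1) and the order conditions $O(\mathcal{A}), O(\mathcal{B}) < \beta - d/2$, each of these is a bounded linear functional on $H(D) \equiv \mathbb{H}^\beta(D)$, and since $H_{\hat{k}}(D)$ is embedded in $H(D)$ they are a fortiori bounded on $H_{\hat{k}}(D)$. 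The Riesz representation theorem then supplies representers $\eta_1, \dots, \eta_m, \eta_0 \in H_{\hat{k}}(D)$ with $L_i u = \inner{u, \eta_i}_{\hat{k}}$ and $\delta_{\bm{x}} u = \inner{u, \eta_0}_{\hat{k}}$; concretely $\eta_0 = \hat{k}(\cdot, \bm{x})$ by the reproducing property, and $\eta_i$ is obtained by letting the operator act on the second kernel argument, in the notation $\bar{\mathcal{L}}$. The crucial bookkeeping is that these representers reproduce exactly the matrices of Prop.~\ref{prop:forward_posterior_2}: $\inner{\eta_i, \eta_j}_{\hat{k}} = L_i \eta_j = [\mathcal{L}\bar{\mathcal{L}}\hat{\bm{K}}(X_0)]_{ij}$ and $\inner{\eta_i, \eta_0}_{\hat{k}} = [\bar{\mathcal{L}}\hat{\bm{K}}(\bm{x}, X_0)]_i$.

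Next I would rewrite both $\mu(\bm{x})$ and $u_0(\bm{x})$ as inner products against $u_0$. Writing $\bm{w}(\bm{x})$ for the weight vector $\bar{\mathcal{L}}\hat{\bm{K}}(\bm{x}, X_0)[\mathcal{L}\bar{\mathcal{L}}\hat{\bm{K}}(X_0)]^{-1}$ appearing in Eqn.~\ref{eq:full_posterior_mean}, the observed data satisfy $\bm{g}_i = L_i u_0$, so $\mu(\bm{x}) = \sum_i w_i(\bm{x}) L_i u_0 = \inner{u_0, \sum_i w_i(\bm{x}) \eta_i}_{\hat{k}}$, while $u_0(\bm{x}) = \inner{u_0, \eta_0}_{\hat{k}}$. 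Introducing the power-function representer $P_{\bm{x}} := \sum_i w_i(\bm{x}) \eta_i - \eta_0 \in H_{\hat{k}}(D)$, the error becomes $\mu(\bm{x}) - u_0(\bm{x}) = \inner{u_0, P_{\bm{x}}}_{\hat{k}}$, and Cauchy--Schwarz immediately yields $|\mu(\bm{x}) - u_0(\bm{x})| \leq \norm{u_0}_{\hat{k}} \norm{P_{\bm{x}}}_{\hat{k}}$.

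It then remains to verify the identity $\norm{P_{\bm{x}}}_{\hat{k}}^2 = \sigma(\bm{x})^2$, which reduces to pure linear algebra. Abbreviating $\bm{M} = \mathcal{L}\bar{\mathcal{L}}\hat{\bm{K}}(X_0)$ and writing $\bm{m}(\bm{x})$ for the vector with entries $\inner{\eta_i, \eta_0}_{\hat{k}}$, so that (by symmetry of $\bm{M}$) $\bm{w}(\bm{x}) = \bm{M}^{-1}\bm{m}(\bm{x})$, I would expand
\[
\norm{P_{\bm{x}}}_{\hat{k}}^2 = \bm{w}^T \bm{M} \bm{w} - 2 \bm{w}^T \bm{m} + \inner{\eta_0,\eta_0}_{\hat{k}} = \bm{m}^T \bm{M}^{-1} \bm{m} - 2 \bm{m}^T \bm{M}^{-1} \bm{m} + \hat{k}(\bm{x},\bm{x}),
\]
using $\inner{\eta_0,\eta_0}_{\hat{k}} = \eta_0(\bm{x}) = \hat{k}(\bm{x},\bm{x})$. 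This collapses to $\hat{k}(\bm{x},\bm{x}) - \bm{m}^T \bm{M}^{-1}\bm{m}$, which is precisely $\sigma(\bm{x})^2$ read off from Eqn.~\ref{eq:full_posterior_cov} with $X = \{\bm{x}\}$. Combining this with the Cauchy--Schwarz step completes the argument.

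The main obstacle, and the only place requiring genuine care rather than routine computation, is the first step: justifying that the operator-evaluation functionals $L_i$ are bounded on $H_{\hat{k}}(D)$ and that their Riesz representers are obtained by differentiating the kernel in the stated way. This is exactly where the smoothness of $\hat{k}$, and hence the derivative-counting order conditions on $\mathcal{A}$ and $\mathcal{B}$, enters, via the Sobolev embedding guaranteed by (A1). I would also remark that the inequality is only informative when $u_0 \in H_{\hat{k}}(D)$; if $u_0 \notin H_{\hat{k}}(D)$ then $\norm{u_0}_{\hat{k}} = \infty$ and the bound holds trivially, so the representer manipulations above may be carried out under the standing assumption $\norm{u_0}_{\hat{k}} < \infty$ without loss of generality.
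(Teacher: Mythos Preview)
Your proposal is correct and follows essentially the same approach as the paper: write $\mu(\bm{x})$ and $u_0(\bm{x})$ as inner products in $H_{\hat{k}}(D)$ via the reproducing property, apply Cauchy--Schwarz to the difference, and identify the resulting representer norm with $\sigma(\bm{x})$. If anything, you are more explicit than the paper in carrying out the final identification (the paper simply asserts that the second factor ``is recognised as $\sigma(\bm{x})$'' after substituting the weights, whereas you spell out the quadratic-form computation) and in flagging the boundedness of the operator-evaluation functionals and the trivial case $\norm{u_0}_{\hat{k}} = \infty$.
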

Proposition~\ref{prop:natural_local_accuracy} shows that minimising $\sigma_\omega(\bm{x})$ leads to accurate estimates $\mu_\omega(\bm{x})$.
This reassures us that the conditional measure $\Pi_u^{\bm{g},\bm{b}}$ over the solution space is locally well-behaved.
Here we have made the dependence of the functions $\mu$ and $\sigma$ on $\omega \in \Omega$ explicit.

To make precise the notion of minimising $\sigma_\omega(\bm{x})$, define the \emph{fill distance} $h = h(X_{0})$ of the design $X_0$ as
\begin{equation*}
h := \sup_{\bm{x} \in D} \min_{\bm{x}' \in X_0} \|\bm{x} - \bm{x}'\|_2.
\end{equation*}
The following is Lemma~3.4 of \cite{Cialenco2012}; see also Sections 11.3 and 16.3 of \cite{Wendland2004}:
\begin{proposition} \label{prop:sigma}
For all $\bm{x} \in D$ and all $h > 0$ sufficiently small, we have $\sigma_\omega(\bm{x}) \leq C_{\omega}^F h^{\beta - \rho - d/2}$ where $\rho = \max\{O(\mathcal{A}) , O(\mathcal{B})\}$ and $C_{\omega}^F$ is a constant dependent on $\omega \in \Omega$.
\end{proposition}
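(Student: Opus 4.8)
The plan is to identify the posterior standard deviation $\sigma(\bm{x})$ with the \emph{power function} of a generalised kernel interpolation scheme and then to bound that power function in terms of the fill distance $h$ via a scattered-data sampling inequality. Let $\{\ell_j\}_{j=1}^{m_{\mathcal{A}}+m_{\mathcal{B}}}$ denote the data functionals induced by the design $X_0$, namely $u \mapsto (\mathcal{A}u)(\bm{x}_{0,j}^{\mathcal{A}})$ at interior points and $u \mapsto (\mathcal{B}u)(\bm{x}_{0,j}^{\mathcal{B}})$ on the boundary. By the Gaussian-process/RKHS isometry, the conditional variance in Prop.~\ref{prop:forward_posterior_2} admits the dual representation
\[
\sigma(\bm{x}) = \min_{\bm{a}} \Big\| \delta_{\bm{x}} - \sum_{j} a_j\, \ell_j \Big\|_{H_{\hat{k}}(D)^*},
\]
the minimum being attained at the kriging coefficients that define $\mu(\bm{x})$; this is precisely the minimax characterisation recorded after Prop.~\ref{prop:natural_local_accuracy}. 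Consequently it suffices to exhibit \emph{any} admissible coefficient vector $\bm{a}(\bm{x})$ for which the dual norm is $O(h^{\beta-\rho-d/2})$, since optimality of the kriging coefficients can only decrease it further.

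To obtain such a bound I would invoke the standard decay estimate for power functions in Sobolev-type native spaces. Under (A1) the relevant native space is norm-equivalent to $\mathbb{H}^\beta(D)$ with $\beta > d/2$, so its reproducing kernel has a generalised Fourier transform comparable to $(1+\|\bm{\omega}\|^2)^{-\beta}$. Each functional $\ell_j$ involves at most $\rho = \max\{O(\mathcal{A}),O(\mathcal{B})\}$ derivatives, so controlling $\sigma(\bm{x})$ reduces to a \emph{generalised sampling inequality} of Narcowich--Ward--Wendland type; the case of plain function-value interpolation is treated in Sec.~11.3 of \cite{Wendland2004}, and the derivative (Hermite--Birkhoff) case in Sec.~16.3. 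The Lipschitz regularity of $D$ assumed at the outset supplies the interior cone condition these estimates require, so that for all sufficiently small $h$ there is a locally supported, polynomial-reproducing choice of $\bm{a}(\bm{x})$ with
\[
\Big\| \delta_{\bm{x}} - \sum_{j} a_j(\bm{x})\, \ell_j \Big\|_{H_{\hat{k}}(D)^*} \leq C\, h^{\beta-\rho-d/2},
\]
uniformly in $\bm{x}\in D$. The exponent is positive because the assumption $O(\mathcal{A}),O(\mathcal{B}) < \beta - d/2$ made after (A1) gives $\beta-\rho-d/2>0$; taking the supremum over $\bm{x}\in D$ then yields the proposition.

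The technical heart, and the step I expect to be most delicate, is the generalised sampling inequality with the correct loss of $\rho$ derivatives. For constraints on plain function values the zeros estimate is classical: a $v\in\mathbb{H}^\beta(D)$ vanishing on a point set of fill distance $h$ satisfies $\|v\|_{L^\infty(D)} \leq C h^{\beta-d/2}\|v\|_{\mathbb{H}^\beta(D)}$. Here the constraints act instead on the derivative data $\mathcal{A}u$ and $\mathcal{B}u$, so one must either apply this estimate to $\mathcal{A}u,\mathcal{B}u$, which lie in Sobolev spaces of reduced orders $\beta-O(\mathcal{A})$ and $\beta-O(\mathcal{B})$, and transfer the bound back to $\delta_{\bm{x}}$, or appeal directly to the Hermite--Birkhoff power-function bound in which the derivative orders enter the polynomial-reproduction argument; either route produces the exponent $\beta-\rho-d/2$. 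One must additionally check that the native space of the posited kernel $\hat{k}$ is genuinely norm-equivalent to $\mathbb{H}^\beta(D)$ as in (A1), and that the boundary functionals do not spoil the fill-distance argument near $\partial D$; both are consequences of the construction of $\hat{k}$ in Sec.~\ref{sec:posited_kernel} and the regularity of $D$. As this is exactly Lemma~3.4 of \cite{Cialenco2012}, the remaining estimates can be imported directly.
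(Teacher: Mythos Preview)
Your proposal is correct and aligns with the paper's treatment: the paper does not supply its own proof of this proposition but simply quotes it as Lemma~3.4 of \cite{Cialenco2012}, with a pointer to Secs.~11.3 and 16.3 of \cite{Wendland2004}. Your sketch---identifying $\sigma(\bm{x})$ with the power function via the minimax characterisation and then invoking the Narcowich--Ward--Wendland sampling inequality, with the Hermite--Birkhoff extension for derivative functionals---is exactly the route those cited references take, so there is nothing to distinguish in method.
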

That the constant here should depend upon $\omega$ is natural, as the posterior covariance in Eq.~\eqref{eq:full_posterior_cov} depends upon both $\mathcal{A}[\omega]$ and $\mathcal{B}[\omega]$. When analysing the forward problem alone this detail is unimportant; it will, however, become relevant later.

Proposition~\ref{prop:sigma} is used to establish contraction of the conditional measure $\Pi_u^{\bm{g},\bm{b}}$ over $H(D)$ to the true solution $u_\omega$ as the fill distance is decreased:
\begin{theorem}[Contraction of conditional measure to $u_\omega$] \label{thm:fwd_contraction}
For fixed $\epsilon > 0$,
\begin{equation*}
	\Pi_u^{\bm{g},\bm{b}}\{u \in H(D) : \|u - u_\omega\|_2^2 > \epsilon\} = O\left( \frac{h^{2\beta - 2\rho - d}}{\epsilon} \right) \text{ as $h \to 0$.}
\end{equation*}
\end{theorem}
A similar result was presented as Lemma~3.5 in \cite{Cialenco2012}.
Theorem~\ref{thm:fwd_contraction} shows that the conditional measure $\Pi_u^{\bm{g},\bm{b}}$ provides sensible uncertainty quantification in a global sense.
However, the ultimate goal is to make accurate inferences on $\theta$, which introduces several considerations that go beyond analysis of the forward problem.
In particular, in the inverse problem, the solution $u = u(\quark,\omega,\pnparam)$ depends on $\omega \in \Omega$.
Thus the term $\|u_\omega\|_{\hat{k}}$ in Proposition~\ref{prop:natural_local_accuracy} will be a random variable.
From a broader perspective we must examine whether, and in what sense, the solution $u(\quark,\omega,\pnparam)$ exists as a random object.
These points are addressed in the next sections.

\subsection{Error Analysis for the Inverse Problem} \label{sec:backward_error}

The aim of Section~\ref{sec:existence_of_RV} is to establish the existence of $u(\quark,\omega,\pnparam)$ as a random object. Section~\ref{sec:inverse_consistency} shows that when using a PMM forward solver, posterior distributions for $\theta$ converge appropriately to the true posterior as $h \to 0$.

\subsubsection{Existence of the Doubly Stochastic Solution} \label{sec:existence_of_RV}

This section makes precise the sense in which the doubly stochastic solution $u(\quark,\omega,\pnparam)$ exists as a random variable.
Let $\mathbb{E}_\Omega$ and $\mathbb{E}_{\pnspace}$ denote expectations with respect to $\mathbb{P}_\Omega$ and $\mathbb{P}_{\pnspace}$.
First recall the notion of a ``Hilbert scale'' of spaces.

Consider an orthonormal basis for $H_\textup{nat}(D)$ such that a generic element $u \in H_\textup{nat}(D)$ can be written as $u = \sum_{i=1}^\infty c_i h_i$, $h_i = \sqrt{\lambda_i} e_i$ where $\lambda_1 \geq \lambda_2 \geq \dots > 0$ are eigenvalues and $e_i$ are associated eigenvectors of the integral operator $u(\quark) \mapsto \int_D u(\bm{x}) k_\textup{nat}(\bm{x},\quark) \wrt\bm{x}$.
The norm for this space is characterised by $\|u\|_\textup{nat}^2 = \sum_{i=1}^\infty c_i^2$.
Define the scale of Hilbert spaces
$H^t := \{ h = \sum_i c_i h_i \; \text{s.t.} \; \|h\|_{\textup{nat},t}^2 := \sum_i \lambda_i^{-t} c_i^2 < \infty \}$
for $t \in \mathbb{R}$ \cite[Section 7.1.3]{Dashti2014}. 
For a generic RKHS $H$ we have that $H^0 = H$, while $H^s \supseteq H^t$ whenever $s \leq t$.
The intuition here is that $H^t$, $t < 0$, is a relaxation of $H$.

An assumption is now made on the regularity of the inverse problem, as captured by the regularity of the natural solution space $H_\textup{nat}(D)$ from Section~\ref{sec:natural_space}.
Recall that $H_\textup{nat}(D)$ is a random space, depending on $\omega \in \Omega$ through the Greens function of the PDE as in  Eq.~\eqref{eq:k_definition}.
Write $\lambda_i^\textup{nat}$ for the eigenvalues associated with $k_\textup{nat}$.
Similarly write $\lambda_i^{(\alpha)}$ for the eigenvalues associated with the Sobolev space $\mathbb{H}^\alpha(D)$ of order $\alpha$.
\begin{enumerate}
\item[(A2)] For some $\alpha \geq \beta$, all $-1 < t < -d/2\alpha$ and $\mathbb{P}_\Omega$-almost all $\omega \in \Omega$, there exist constants $0 < C_{\omega}$ and $C_{\omega,t} < \infty$ such that, for all $v \in H_\textup{nat}(D)$, $i \in \mathbb{N}$,
\begin{equation*}
	\|v\|_{\mathbb{H}^\alpha(D),t}^2 \; \leq \; C_{\omega,t} \|v\|_{\textup{nat},t}^2, \; \; \; \lambda_i^{(\alpha)} \; \leq \; C_\omega \lambda_i^\textup{nat}
\end{equation*}
and $\mathbb{E}_\Omega[\; C_{\omega}^{t} \; C_{\omega,t} \;] \; < \; \infty$.
\end{enumerate}
(A2) implies, in particular, that $H_\textup{nat}^t(D)$ is embedded in $[\mathbb{H}^\alpha(D)]^t \equiv \mathbb{H}^{(1+t) \alpha}(D)$ for $\Pi_\theta$-almost all values of the parameter $\theta \in \Theta$.
Note that $\alpha$ and $\beta$ assume distinct roles in the analysis; $\alpha$ captures the regularity of the unavailable natural solution space $H_\textup{nat}(D)$, while $\beta$ captures the regularity of the larger space $H(D)$ in which the numerical solver operates.
In general we must have $\alpha \geq \beta$.

\begin{theorem}[Existence] \label{thm:well_defined}
For all $0 < s < \alpha - d/2$, the function $u$ exists as a random variable in 
\begin{equation*}
L_{\mathbb{P}_\Omega \mathbb{P}_{\pnspace}}^2(\Omega , \pnspace; \mathbb{H}^s(D)) := \{ v \colon D \times \Omega \times \pnspace \to \mathbb{R} \; \text{\normalfont s.t.} \; \mathbb{E}_\Omega \mathbb{E}_{\pnspace} \|v\|_{\mathbb{H}^s(D)}^2 < \infty \}
\end{equation*}
and takes values $(\mathbb{P}_\Omega,\mathbb{P}_{\pnspace})$-almost surely in $\mathbb{H}^s(D)$.
\end{theorem}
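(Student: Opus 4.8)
The plan is to verify the two defining requirements in turn: that the second moment $\mathbb{E}\mathbb{E}'\|u\|_{\mathbb{H}^s(D)}^2$ is finite, from which $(\mathbb{P},\mathbb{P}')$-almost sure membership in $\mathbb{H}^s(D)$ is immediate, and that $u$ is jointly measurable so that it genuinely defines a random variable into this Bochner space. Throughout I set $t := s/\alpha - 1$, noting that the hypothesis $0 < s < \alpha - d/2$ is exactly equivalent to $-1 < t < -d/(2\alpha)$, the admissible range in (A2), and that under this identification $\mathbb{H}^s(D) \equiv [\mathbb{H}^\alpha(D)]^t$ by the norm-equivalence recorded after (A2). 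Since $\|u\|_{\mathbb{H}^s(D)}^2 \geq 0$, Tonelli's theorem lets me compute $\mathbb{E}\mathbb{E}'$ as an iterated expectation, fixing $\omega \in \Omega$ first and integrating over $\omega'$.

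First I would fix $\omega$, which fixes the parameter $\theta(\omega)$, the operators $\mathcal{A}[\omega], \mathcal{B}[\omega]$, and hence the natural kernel $k = k_\omega$ and its RKHS $H_k(D)$. By Prop.~\ref{noise model}, with the forcing $g$ a centred Gaussian field of covariance $\Lambda$, the law of $u(\cdot,\omega,\cdot)$ is the centred Gaussian measure $N(0,k_\omega)$. Writing the Karhunen--Lo\`eve expansion $u = \sum_i Z_i h_i$ with $Z_i$ i.i.d.\ $N(0,1)$ and $\{h_i\}$, $h_i = \sqrt{\lambda_i^{(k)}}e_i$, the orthonormal basis of $H_k(D)$ from Sec.~\ref{sec:existence_of_RV}, independence and mean-zero of the $Z_i$ give $\mathbb{E}'\|u\|_{\mathbb{H}^s(D)}^2 = \sum_i \|h_i\|_{\mathbb{H}^s(D)}^2$, again by Tonelli applied to the nonnegative series.

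The core estimate is then to bound each basis term. Using $\mathbb{H}^s(D) \equiv [\mathbb{H}^\alpha(D)]^t$ and the scale-$t$ norm domination in (A2), $\|h_i\|_{\mathbb{H}^s(D)}^2 \lesssim \|h_i\|_{\mathbb{H}^\alpha(D),t}^2 \leq C_{\omega,t}\|h_i\|_{k,t}^2$, and since $h_i$ has coefficients $c_j = \delta_{ij}$ in the $k$-scale, $\|h_i\|_{k,t}^2 = (\lambda_i^{(k)})^{-t}$. Hence $\mathbb{E}'\|u\|_{\mathbb{H}^s(D)}^2 \lesssim C_{\omega,t}\sum_i (\lambda_i^{(k)})^{-t}$. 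To control this series I would invoke the eigenvalue comparison in (A2) between $\lambda_i^{(k)}$ and the Sobolev eigenvalues $\lambda_i^{(\alpha)}$: combined with the standard Weyl asymptotics $\lambda_i^{(\alpha)} \asymp i^{-2\alpha/d}$, this dominates $\sum_i (\lambda_i^{(k)})^{-t}$ by a random multiple of $\sum_i (\lambda_i^{(\alpha)})^{-t}$, the latter a finite deterministic constant $S$ precisely because $t < -d/(2\alpha)$ makes the exponent $-2\alpha t/d > 1$. Collecting the random constants yields $\mathbb{E}'\|u\|_{\mathbb{H}^s(D)}^2 \leq S\, C_\omega^{t} C_{\omega,t}$. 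Taking $\mathbb{E}$ over $\omega$ and invoking the integrability $\mathbb{E}[C_\omega^{t} C_{\omega,t}] < \infty$ from (A2) gives $\mathbb{E}\mathbb{E}'\|u\|_{\mathbb{H}^s(D)}^2 < \infty$; finiteness of this second moment forces $\|u\|_{\mathbb{H}^s(D)} < \infty$ for $(\mathbb{P},\mathbb{P}')$-almost all $(\omega,\omega')$.

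The main obstacle is the outer $\omega$-integration rather than the inner Gaussian computation, which is routine. The difficulty is that the natural space $H_{k_\omega}(D)$ is itself random through $\theta(\omega)$, so both the switch from the $k$-scale to the fixed Sobolev scale and the eigenvalue decay carry $\omega$-dependent constants; the argument only closes because (A2) supplies a comparison uniform in $i$ and, crucially, controls the moment of the product $C_\omega^{t}C_{\omega,t}$ of the two random constants. The delicate bookkeeping is to apply the eigenvalue comparison in the direction that bounds the series from above and to track the exponents of $C_\omega$ so that they match the stated moment condition exactly. Joint measurability of $(\bm{x},\omega,\omega') \mapsto u(\bm{x},\omega,\omega')$ then follows from measurability of the parameter-to-kernel map together with the measurable dependence of the conditional Gaussian field on its driving white noise, so that $u$ is a bona fide element of $L^2_{\mathbb{P}\mathbb{P}'}(\Omega,\Omega';\mathbb{H}^s(D))$.
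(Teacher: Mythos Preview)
Your proposal is correct and follows essentially the same route as the paper: fix $\omega$, exploit the Gaussian law of $u(\cdot,\omega,\cdot)$ via a Karhunen--Lo\`eve expansion in the $H_k$-basis to reduce $\mathbb{E}'\|u\|^2$ to an eigenvalue series, then invoke (A2) twice---once for the norm comparison and once for the eigenvalue comparison---before taking the outer expectation against the moment condition $\mathbb{E}[C_\omega^{t}C_{\omega,t}]<\infty$. The only cosmetic difference is that the paper first establishes the isometry $\|u\|_{k,t}=\|\mathcal{A}u\|_{\Lambda,t}=\|g\|_{\Lambda,t}$ and carries out the Gaussian computation on the forcing side in $H_\Lambda$, whereas you work directly with $u$ in the $k$-scale; these are equivalent by construction of $H_k$.
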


The sense of existence used in Theorem \ref{thm:well_defined} is precisely the same sense in which a Gaussian process exists, where the covariance function forms a kernel for $\mathbb{H}^\alpha(D)$ \cite[Theorem 2.10]{Dashti2014}.

\subsubsection{Posterior Contraction} \label{sec:inverse_consistency}

This section elaborates on the sense in which the posterior in the Bayesian inverse problem approaches the idealised posterior $\Pi_\theta^{\bm{y}}$ in Eq.~\eqref{eq:bayes_first}, when the forward problem is approximated using the PMM.

Suppose that the data $\bm{y}$ are noisy measurements of the solution $u(\bm{x})$ at locations $X = \{\bm{x}_j\}_{j=1}^n \subset D$. Further assume that measurement noise is Gaussian, so that $\bm{y} \sim N(\bm{u} , \bm{\Gamma})$.
Define the potential $\Phi_h(\bm{y} , \theta) := -\log \pi(\bm{y} | \theta , \bm{g}, \bm{b})$, to emphasise dependence on the fill distance $h$. 
Note that many sets of collocation points $X_0$ can each have the same fill distance, so $h$ does not uniquely define $\Phi_h$.
Recall that the posterior distribution $\Pi_\theta^{\bm{y}, h}$ is given by
\begin{align*}
\frac{\wrt \Pi_\theta^{\bm{y},h}}{\wrt\Pi_\theta}(\theta) = \frac{1}{Z_h} \exp(- \Phi_h(\bm{y} , \theta)) , \numberthis \label{norm BIP 2}
\intertext{where}
Z_h := \int_\Theta \exp(-\Phi_h(\bm{y} , \theta)) \Pi_\theta(\wrt \theta) .
\end{align*}
Following \cite[Section 3.4.3]{Dashti2014}, it is clear that $Z_h > 0$ for all $h$, provided that $\mathcal{A}$ and $\mathcal{B}$ are non-degenerate, so that the conditional covariance $\Sigma(\theta)$ is finite and positive-definite for all $\theta \in \Theta$, and that the set $X_0$ contains only unique points. Thus, the posterior distribution is well-defined for all $h$.

Now of interest is whether the posterior distribution $\Pi_\theta^{\bm{y}, h}$ contracts to $\Pi_\theta^{\bm{y}}$ as $h\to 0$, as quantified by the Hellinger metric $d_\textup{Hell}$ given by
\begin{equation*}
	d_\textup{Hell}(\Pi_{\theta}^{\bm{y}, h}, \Pi_\theta^{\bm{y}})^{2} := 1 - \int_\Theta \left( \frac{\wrt \Pi_\theta^{\bm{y}, h}}{\wrt \Pi_\theta^{\bm{y}}}(\theta) \right)^{1/2} \, \Pi_\theta^{\bm{y}}(\wrt \theta) .
\end{equation*}
This requires an additional assumption:
\begin{enumerate}
	\item[(A3)] There exists a function $C(\norm{\theta}_\Theta)$ such that for each $\omega$, 
	\begin{equation*}
		\max \{ C_\omega^F , C_\omega^F \norm{u_\omega}_{\hat{k}}, C_\omega^F \norm{u_\omega}_{\hat{k}}^2 \} \leq C(\norm{\theta(\omega)}_\Theta)
	\end{equation*} 
	and $\int C(\|\theta\|_{\Theta})^4 \wrt \Pi_\theta < \infty$.
\end{enumerate}
Then, the following theorem follows:
\begin{theorem}[Robustness to Approximation Error] \label{thm:theta_consistency}
The posterior distribution $\Pi_\theta^{\bm{y}, h}$ satisfies $d_\textup{Hell}(\Pi_\theta^{\bm{y}, h}, \Pi_\theta^{\bm{y}}) = \mathcal{O}(h^{\beta - \rho - d/2})$.
\end{theorem}

\section{Implementation} \label{sec:compute}

In this section several computational details are considered. First, in Section~\ref{sec:experimental_design} the selection of the set $X_0$ of design locations is discussed. Second, a method for extending the procedure described above to a limited class of nonlinear systems is proposed in Section~\ref{sec:nonlinear}.

\subsection{Selection of Collocation Points} \label{sec:experimental_design}

The inferences drawn with PMMs are valid from a statistical perspective, regardless of the locations $X_0$ that are used to implement the method.
Yet, how informative the inferences are will depend upon choice of $X_0$.
This section outlines a principled approach to selecting $X_0$ based on experimental design. 

The selection of $X_{0}$ is subject to competing considerations.
The theoretical results of Section~\ref{sec:forward_error} imply that a principal goal in minimising error is to minimise the fill distance $h(X_{0})$.
However, the fact that data $\bm{y} \in \reals^n$ are obtained at specific locations suggests that design points should be placed to minimise uncertainty at those locations, for example by placing collocation points near to data locations.
Two additional requirements arise from a practical perspective:
First, while a large number $m = m_{\mathcal{A}} + m_{\mathcal{B}}$ of design points will minimise uncertainty, the computational cost of the method grows rapidly with $m$.
Second, the method used to \emph{select} locations must not be too computational.

The approach pursued is to cast the choice of $X_0$ as a problem of statistical experimental design.
This is made possible by the probabilistic formulation of the meshless method.
Below we write $\bm{\Sigma}(\theta,X_0) \in \reals^{n \times n}$ for the posterior covariance matrix of the $\reals^{n}$-valued solution vector $\bm{u}$ to emphasise that this depends on both the parameter $\theta$ and the choice of design $X_0$.

To proceed, let $L \colon \reals^{n \times n} \to \reals$ denote a loss function.
Define an optimal design $X_0^*(\theta)$ for fixed $\theta$ to satisfy:
\begin{equation}
  X_0^*(\theta) \in \argmin_{\substack{
    X_0^{\mathcal{A}} \subset D , \; X_0^{\mathcal{B}} \subset \partial D \\
    |X_0^{\mathcal{A}}| = m_{\mathcal{A}}, \; |X_0^{\mathcal{B}}| = m_{\mathcal{B}}}
  } L[\bm{\Sigma}(\theta,X_0)]. \label{eq:optimal_design}
\end{equation}
Particular choices of $L$ are suggested from Bayesian decision theory.
For example, an A-optimal design minimises $L[\bm{\Sigma}] = \trace[\bm{\Sigma}]$, while a D-optimal design minimises $L(\bm{\Sigma}) = \det[\bm{\Sigma}]$.
Both A- and D-optimality can be motivated from a Hilbert-space perspective on
posterior uncertainty.
The A-optimality criterion is equivalent to minimising the trace of the posterior covariance operator $\int_D \sigma(\bm{x})^2 \wrt\bm{x}$, while D-optimality minimises the volume of the uncertainty ellipsoid \citep{Alexanderian2014a}.

The minimisation required to determine optimal designs required by Eq.~\eqref{eq:optimal_design} is non-trivial, as it is both high-dimensional and non-convex. The computational details are described in the supplement, Section~\ref{sec:experiment_design_supplement}.

\subsection{Extension to a Class of Semi-linear PDEs} \label{sec:nonlinear}

An important motivation for probabilistic numerical solvers comes from inverse problems that involve a non-linear forward model. At present little is known about the performance of meshless methods in this setting.

Non-linear problems abound in the applied sciences and numerical methods for these problems require substantially more computational effort.
There is thus a strong computational motivation for exploiting meshless methods in many non-linear inverse problems.
However, the theoretical analysis of meshless methods for non-linear problems is not available and for inferences to be statistically valid, a more detailed characterisation of numerical error is required.
In this section the framework of PMM is extended to the case when case when the underlying PDE model is non-linear.
To limit scope, the focus here is on a particular class of non-linear PDEs, known as \emph{semi-linear} PDEs, that are rich enough to exhibit canonical non-linear behaviour (e.g.~multiple solutions), whilst also permitting tractable algorithms.

\subsubsection{A Latent Variables Approach for Semi-Linear PDEs}

Here we generalise the previous sections to operators of the form $\mathcal{A} = \mathcal{A}_1 + \dots + \mathcal{A}_N$, where each of the $\mathcal{A}_j$ is either a linear differential operator, or a possibly non-linear monotonic operator.
This class is motivated by the observation that monotonic operators are invertible.
Below, this invertibility is exploited to reduce the system to a linear system, to which above methods can be applied:

As an illustrative example, consider the steady-state Allen--Cahn equation, which is often used to model the boundaries between phases in alloys \cite{Allen1979}:
\begin{equation*}
	-\theta \nabla^2 u(\bm{x}) + \theta^{-1} (u(\bm{x})^3 - u(\bm{x})) = g(\bm{x}).
\end{equation*}
This is a semi-linear PDE with linear differential operator $\mathcal{A}_1 u = - \theta \nabla^2 u - \theta^{-1} u$ and monotonic operator $\mathcal{A}_2 u = \theta^{-1} u^3$, where $\theta > 0$ is a scalar parameter.

In the case of $N = 2$ we have $\mathcal{A} = \mathcal{A}_1 + \mathcal{A}_2$ and the indirect, non-linear observations $\mathcal{A} u(\bm{x}_i) = g(\bm{x}_i)$ can be decomposed into direct observations by introducing a latent function $z$ such that $\mathcal{A}_1 u(\bm{x}_i) = z(\bm{x}_i)$ and $\mathcal{A}_2 u(\bm{x}_i) = g(\bm{x}_i) - z(\bm{x}_i)$. 
Concretely, for the Allen--Cahn system,
\begin{align*}
	-\theta \nabla^2 u(\bm{x}_i) - \theta^{-1} u(\bm{x}_i) &= z(\bm{x}_i) \\
	\theta^{-1} u(\bm{x}_i)^3 &= g(\bm{x}_i) - z(\bm{x}_i).
\end{align*}
The final equation can be inverted to produce $u(\bm{x}) = (\theta(g(\bm{x}_i) - z(\bm{x}_i)))^{1/3}$, which leads to a system of equations that is linear in the solution $u$, depending on the unknown function $z$, and can be solved using the methods previously introduced.
However, to make inferences on both the actual solution $u$ in the forward problem, and $\theta$ in the inverse problem, we must be able to efficiently marginalise the unknown latent function $z$.
This is discussed next.

\subsubsection{Conditional Measure with Latent Variables} \label{sec:semi_linear_conditional}

To simplify the notation, details are presented here for only the simplest case, in which $\mathcal{A}_1$ is a linear differential operator and $\mathcal{A}_2$ is a monotonic operator such that $\mathcal{A}_2^{-1}$ is known.
The previous notation is extended as follows:
\begin{equation*}
\bm{z} = \mathcal{A}_1 \bm{u}, \; \; \; \mathcal{L} = \begin{bmatrix} \mathcal{A}_1 \\ \mathcal{I} \\ \mathcal{B} \end{bmatrix}, \; \; \; \bar{\mathcal{L}} = \begin{bmatrix} \bar{\mathcal{A}}_1 & \mathcal{I} & \bar{\mathcal{B}} \end{bmatrix} ,
\end{equation*}
where $\mathcal{I} \colon H(D) \to H(D)$ is the identity operator.
Here we have written $\bm{z}$ for the $m_{\mathcal{A}} \times 1$ vector with $j$\textsuperscript{th} element $z(\bm{x}_i)$.
To simplify the notation in this section, dependence on the parameter $\theta$ is suppressed.
For non-linear $\mathcal{A}_2$, the marginal probability distribution $\Pi_u^{\bm{g},\bm{b}}$ will no longer be Gaussian.
However, when $\bm{z}$ is included, $\Pi_u^{\bm{z},\bm{g},\bm{b}}$, representing the conditional distribution of $u | \bm{z}, \bm{g}, \bm{b}$, is Gaussian and its finite dimensional distribution at the test points $X$ takes the form
\begin{equation*}
\bm{u} | \bm{z}, \bm{g} , \bm{b} \sim N(\bm{\mu} , \bm{\Sigma})
\end{equation*}
where
\begin{align*}
\bm{\mu} & =  \bar{\mathcal{L}}\bm{K}(X,X_0) [\mathcal{L}\bar{\mathcal{L}}\bm{K}(X_0)]^{-1} \begin{bmatrix}\bm{z} \\ \mathcal{A}_2^{-1}(\bm{g} - \bm{z}) \\ \bm{b} \end{bmatrix} \\
\bm{\Sigma} & = \bm{K}(X) - \bar{\mathcal{L}}\bm{K}(X,X_0) [ \mathcal{L}\bar{\mathcal{L}}\bm{K}(X_0)]^{-1} \mathcal{L}\bm{K}(X_0,X) .
\end{align*}
This observation suggests that an efficient sampling scheme for $\Pi_u^{\bm{z} , \bm{g}, \bm{b}}$ can be constructed, with samples from $\Pi_u^{\bm{g}, \bm{b}}$ read off as a marginal.
Full details are provided in the supplement, Section~\ref{sec:pm_mcmc}.
In Section \ref{sec:allen_cahn} we present numerical experiments that make use of this latent variables approach.

\section{Experiments} \label{sec:applications}

The empirical performance of PMM is now explored.
The first application considers electrical impedance tomography (EIT), an infinite-dimensional inverse problem with linear governing equations.
The second is a more challenging application to the steady-state Allen--Cahn equation, a finite-dimensional inverse problem with non-linear governing equations.
These two applications combine to illustrate the salient properties of the method.\footnote{The numerical experiments in this section can be reproduced using the Python library hosted at \url{https://github.com/jcockayne/bayesian_pdes}.}

\subsection{Application to Electrical Impedance Tomography} \label{sec:EIT}

%!TEX root=./strong.tex

EIT is a technique used for medical imaging in which an electrical current is passed through electrodes attached to a patient.
The statistical challenge is to use voltage measurements from these electrodes to determine interior conductivity, for example for the purposes of detecting brain tumours. 
It is known that EIT is well-posed as a PDE-constrained Bayesian inverse problem \cite{Dunlop2015}.
However, previous work required numerical error in the PDE solver to be tightly controlled, at increased numerical cost (e.g. \citep{Schwab2012}).

The nature of the observations, coupled with the complexity of the conductivity field, makes sampling from the posterior in the Bayesian inverse problem difficult. 
PMMs are attractive as they permit a coarse discretisation to be used while still providing rigorous statistical inference. 
Below it is shown that when the conductivity field is recovered using a cheap PMM for the forward problem, the posterior variance of the field is appropriately inflated and remains meaningful.

In this work a simplified version of the EIT model is used, as originally posed in \cite{Calderon1980}, in which it is assumed that a current is applied on $\partial D$, while the electrodes are modelled as points. 
The system takes the form of an elliptic PDE that is linear in the solution $u$, the voltage relative to an arbitrary ground voltage:
\begin{alignat}{2}
	-\nabla \cdot (a(\bm{x}) \nabla u(\bm{x})) &= 0 &\quad& \bm{x} \in D \label{eq:eit_interior}\\
	a(\bm{s}_e) \frac{\partial u}{\partial \bm{n}}(\bm{s}_e) &= c_{e} &\quad& e=1,\dots,N_S \label{eq:eit_boundary}
\end{alignat}
where $a$ is the unknown conductivity field, to be recovered, $\frac{\partial}{\partial \bm{n}}$ denotes the directional derivative along the outward pointing normal vector, $\set{\bm{s}_e} \subset \partial S$, $e=1,\dots,N_s$ denote the locations of the point electrodes on the boundary, and $c_{e}$ denotes the current applied to electrode $e$. 

\begin{figure}
	\centering
	\includegraphics[width=0.5\textwidth]{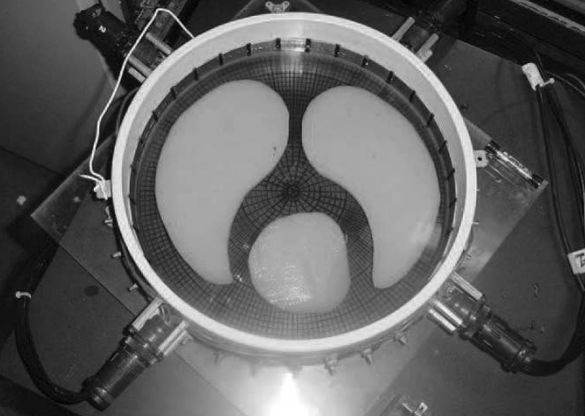}
	\caption{Agar targets from which the measurements used in Section~\ref{sec:EIT} were obtained. The two large lung-shaped targets each have a lower conductivity than the surrounding saline, while the smaller heart-shaped target has a higher conductivity.}
	\label{fig:eit_experiment_practical}
\end{figure}

The inverse problem is to infer $a$ from measurements $y_e^{(j)}$ of the sensor voltages $u(\bm{s}_e)$ obtained from various current patterns $\bm{c}^{(j)}$. The sensor voltages are assumed to have been corrupted with Gaussian noise with variance equal to $5.0$, chosen arbitrarily based on the scale of the measurements owing to a lack of information on the actual measurement error of the sensors.
Data was obtained from the EIDORS suite of contributed data\footnote{This can be found online at \url{http://eidors3d.sourceforge.net/data_contrib/jn_chest_phantom/jn_chest_phantom.shtml}.} and is due to \cite{Isaacson2004}. 
This data was obtained from measurements of 32 equispaced electrodes around the perimeter of a circular tank filled with saline solution, into which three agar targets were placed, as depicted in Figure~\ref{fig:eit_experiment_practical}. The larger two targets each have a lower conductivity than the surrounding saline, while the smaller target has a higher conductivity.
Data consisted of direct voltage measurements from each electrode for $N_s=31$ distinct stimulation patterns. Each pattern $j$ applies a current $c_{e}^{(j)}$ to each electrode $e$ given by
\begin{equation*}
	c_{e}^{(j)} = \left\{\begin{array}{ll}
		M\cos\left( j \phi_e\right) & j < (N_s+1)/2 \\
		M\cos\left( \pi e \right) & j = (N_s+1)/2\\
		M\sin\left( \left(j - (N_s+1)/2 \right) \phi_e \right) & j > (N_s+1)/2
	\end{array} \right.,
\end{equation*}
where $M$ is the amplitude of the current, and $\phi_e = \frac{2\pi e}{(N_s+1)/2}$.
The units for stimulation were milliamps, while voltages were measured in volts.

In the implementation of the forward solver the domain $D$ was modelled as a unit disc. 
The solution was endowed with a Gaussian prior, $u \sim N(0, k)$, where $k$ was taken to be a squared exponential covariance $k(\bm{x}, \bm{x}') = \sigma \exp( - \frac{1}{2\ell^2} \norm{\bm{x} - \bm{x}'}^2)$. Here $\sigma$ is a hyper-parameter which captures the amplitude of samples from the prior and $\ell$ is a length-scale hyper-parameter. 
Note that, since the convolution of the squared exponential kernel with itself is again squared exponential, this corresponds to choosing $\tilde{k}$ to be squared exponential in Section~\ref{sec:posited_kernel}.
The parameter $\sigma$ was fixed to 100, chosen to ensure that the range of solutions corresponds to the range of boundary voltage observations, while $\ell$ was endowed with a half-range Cauchy prior, following the advice of \cite{Gelman2006}, and marginalised.
In this instance the experimental design methodology was not used, as it was found that a large number of design points were required, so that the optimisation problem for determining the optimal design was prohibitively computational. 
As a result, regular designs were used for a variety of different sizes $m_\mathcal{A}$.
Figure~\ref{fig:eit_solution_design} shows an example of such a design.

A log-Gaussian prior was assigned to the conductivity field, with $a(\bm{x}) = \log(\theta(\bm{x}))$ and $\Pi_\theta = N(m_\theta, k_\theta)$. The kernel $k_\theta$ was taken to be squared-exponential, with $\ell = 0.3$ and $\sigma = 1.0$. The prior mean was fixed to a constant, chosen by maximising the log-likelihood of observations over constant conductivity fields. 
Note that the conductivity field would not be expected to be so smooth, particularly for this problem where the agar has hard boundaries. 
Hard boundaries can be recovered using techniques such as level set inversion \cite{Dunlop:2016gb}, but a relaxation to smooth conductivity fields is common in EIT.
Furthermore this relaxation is required here because a strong-form solution for the forward problem is sought, which precludes discontinuities in the conductivity field.

The inverse problem was solved using the preconditioned Crank--Nicolson (pCN) method as described in Section~\ref{sec:pCN}.
For this purpose the conductivity field was discretised to a regular grid of 177 points, as depicted in Figure~\ref{fig:eit_conductivity_design}. 

\begin{figure}
\centering
	\begin{subfigure}{0.6\textwidth}
	\centering
		\includegraphics[height=120px]{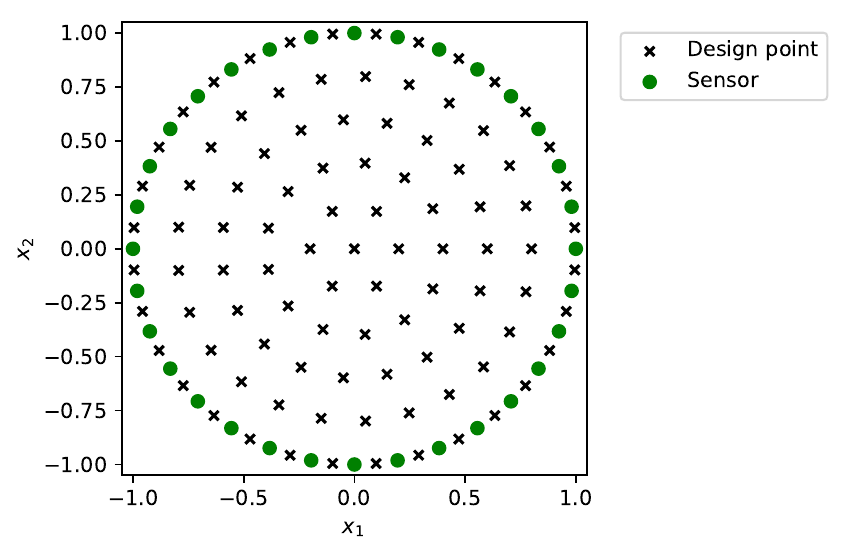}
		\caption{Discretisation of $u(\bm{x})$} \label{fig:eit_solution_design}
	\end{subfigure}
	\begin{subfigure}{0.39\textwidth}
	\centering
		\includegraphics[height=120px]{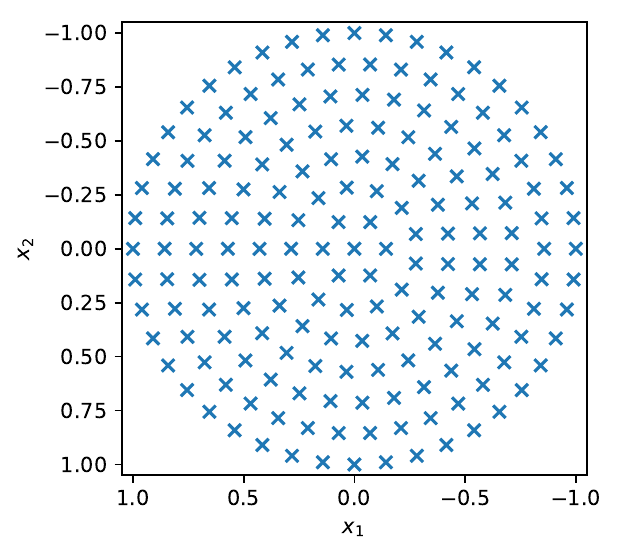}
		\caption{Discretisation of $\theta(\bm{x})$} \label{fig:eit_conductivity_design}
	\end{subfigure}
	\caption{Regular designs for the EIT experiment in Section~\ref{sec:EIT}. The design in \ref{fig:eit_solution_design} is a representative example for $m_\mathcal{A}=96$ points.}
\end{figure}

In Figure~\ref{fig:eit_posterior_means} posterior conductivity fields are plotted for $m_\mathcal{A}=96$, $127$, $165$ and $209$ design points. Each is based on $10,000,000$ iterations of pCN with the first $5,000,000$ discarded. Compared with Figure~\ref{fig:eit_experiment_practical} the qualitative accuracy of the recovered field is clear, though even with the coarsest discretisation the posterior over $\theta$ shows the main features of the targets. In Figure~\ref{fig:eit_mean_convergence} the distance between the posterior mean of the conductivity field and that of a reference conductivity field $\theta_\text{ref}$, obtained from a finer discretisation with $m_\mathcal{A}=259$, is displayed.

\begin{figure}
	\centering
	\begin{subfigure}{0.24\textwidth}
		\centering
		\includegraphics[width=\textwidth]{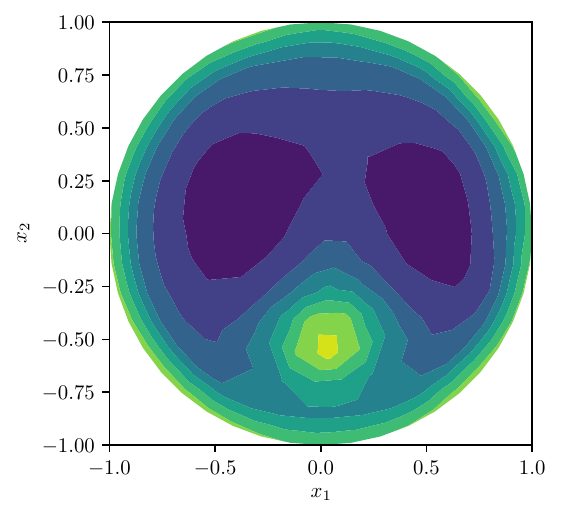}
		\caption{$m_\mathcal{A}=96$}
	\end{subfigure}
	\begin{subfigure}{0.24\textwidth}
		\centering
		\includegraphics[width=\textwidth]{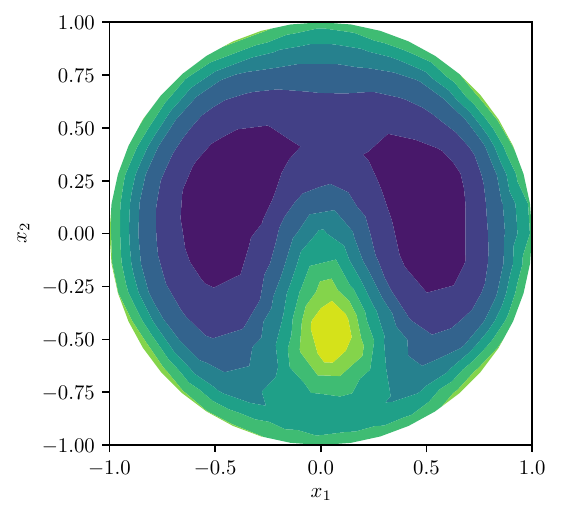}
		\caption{$m_\mathcal{A}=127$}
	\end{subfigure}
	\begin{subfigure}{0.24\textwidth}
		\centering
		\includegraphics[width=\textwidth]{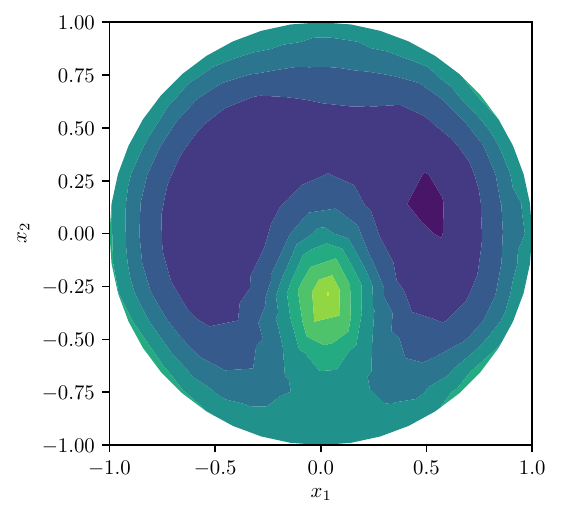}
		\caption{$m_\mathcal{A}=165$}
	\end{subfigure}
	\begin{subfigure}{0.24\textwidth}
		\centering
		\includegraphics[width=\textwidth]{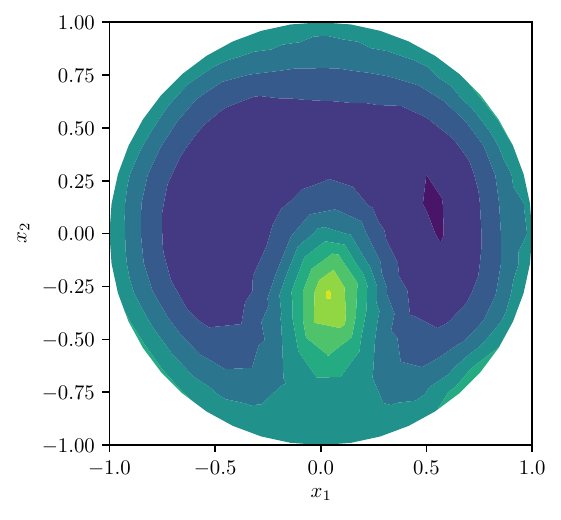}
		\caption{$m_\mathcal{A}=209$}
	\end{subfigure}
	\caption{Mean of $a(\bm{x})$ for Section~\ref{sec:EIT}. Each figure shows the posterior mean for the PMM forward solver.} \label{fig:eit_posterior_means}

	\includegraphics[width=0.5\textwidth]{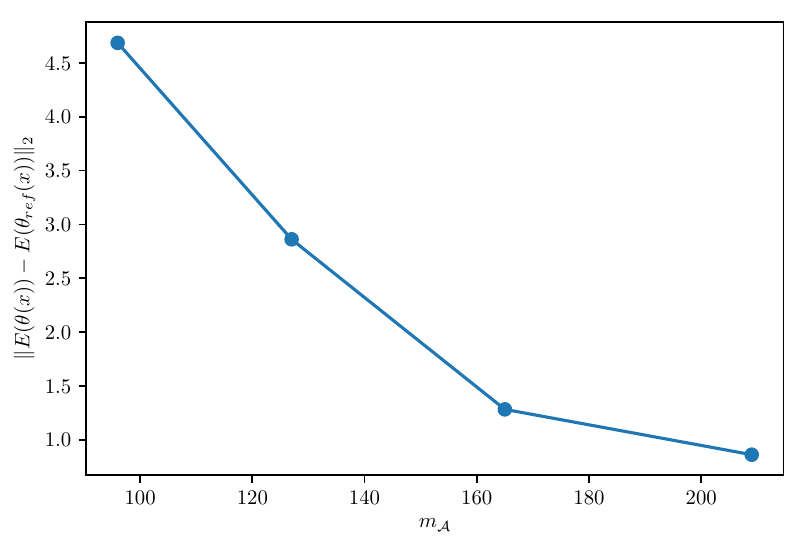}
	\caption{Convergence of the posterior mean to the reference field $\theta_\text{ref}$ in Section~\ref{sec:EIT}.}
	\label{fig:eit_mean_convergence}
\end{figure}

In Figure~\ref{fig:eit_variance_ratio} the ratio of pointwise variance of the PMM and symmetric collocation posteriors is displayed. It can be seen that almost universally throughout the domain, the PMM forward solver results in a higher variance, as expected. 
To summarise this more quantitatively we computed
\begin{equation*}
	s_{m_\mathcal{A}} := \int_{\Theta} \norm{\Sigma_\textrm{coll}^{-1/2}(\theta - \mu_{\textrm{coll}})}_2^2 \wrt \Pi_\theta^{\bm{y},h}(\theta)
\end{equation*}
where $\Pi_\theta^{\bm{y},h}$ is the posterior for $\theta$ based on PMM with $m_\mathcal{A}$ design points, while the mean $\mu_\text{coll}$ and covariance $\Sigma_\text{coll}$ are the posterior mean and variance for $\theta$ obtained instead with a symmetric collocation forward solver rather than a PMM forward solver. 
To first order, when $s_{m_\mathcal{A}} < 1$ the PMM posterior $\Pi_\theta^{\bm{y},h}$ is interpreted as being over-confident relative to the collocation benchmark; this would represent a failure to account for discretisation error. Note though that $s_{m_{\mathcal{A}}} > 1$ does not, in itself, imply that the uncertainty quantification is correct.
Figure~\ref{fig:eit_statistic_convergence} shows, as expected, that the posterior distribution is more conservative when using the PMM forward solver.
This demonstrates that more cautious inferences are being arrived at when uncertainty due to discretisation of the PDE is taken into account.
Furthermore the inferences become broadly less conservative as $m_\mathcal{A}$ is increased.

\begin{figure}
	\centering
	\begin{subfigure}{0.44\textwidth}
	\centering
	\includegraphics[width=\textwidth]{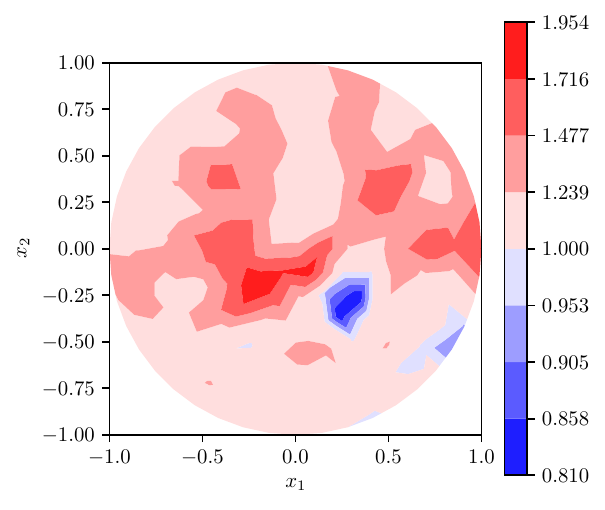}
	\caption{Ratio of the variance in the posterior distribution arising from using a PMM forward solver, compared to a symmetric collocation forward solver, at $m_\mathcal{A}=96$ points.}
	\label{fig:eit_variance_ratio}
	\end{subfigure}
	\begin{subfigure}{0.47\textwidth}
	\centering
	\includegraphics[width=\textwidth]{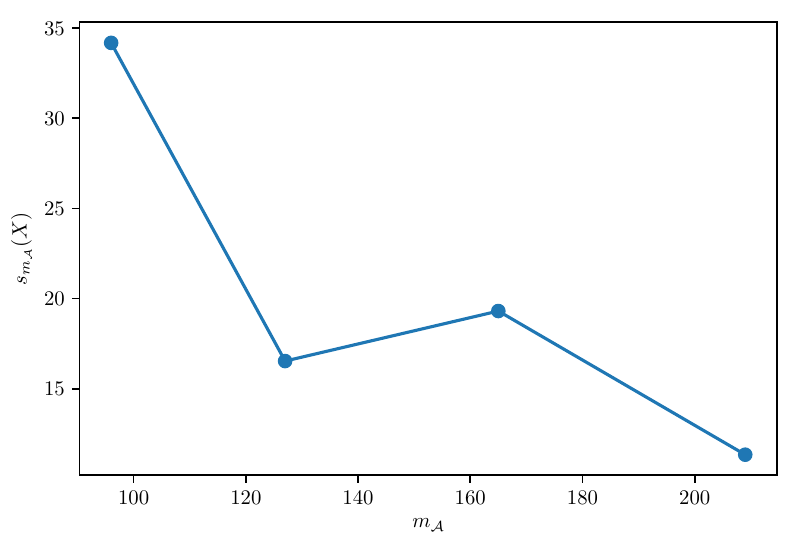}
	\caption{Convergence of the statistic $s_{m_\mathcal{A}}$ as a function of $m_\mathcal{A}$. }
	\end{subfigure}
	\caption{Posterior variance analysis for the analysis in Section~\ref{sec:EIT}.}
	\label{fig:eit_statistic_convergence}
\end{figure}

\subsection{Application to the steady-state Allen--Cahn Equation} \label{sec:allen_cahn}

Next we considered the steady-state Allen--Cahn system \citep{Allen1979}
\begin{align*}
-\theta \; \nabla^2 u + \theta^{-1} (u^3 - u) &= 0 &&x \in (0,1)^2 \\
u &= +1  &&x_1 \in \{0,1\}, \; 0 < x_2 < 1 \\
u &= -1 &&x_2 \in \{0,1\}, \; 0 < x_1 < 1 .
\end{align*}
The data-generating value $\theta^{\dagger} = 0.04$ was used, which is of interest because it leads to three distinct solutions of the PDE.
This set-up was recently considered in \cite{Farrell2014}, where the deflation technique was used to determine all solutions $u_1$ (``negative stable''), $u_2$ (``unstable'') and $u_3$ (``positive stable''), which are shown in Figure~\ref{fig:AC_sols}.
The existence of multiple solutions provides additional motivation for the quantitative description of solver error that is provided by PMMs.
Data were generated from the unstable solution $u_2$ to this system; in total $n = 16$ observations were taken on a $4\times 4$ grid in the interior of the domain and each observation was corrupted with Gaussian noise with covariance $\bm{\Gamma} = 0.1^2\bm{I}$.
The existence of three distinct solutions to the PDE for each value of $\theta$ was ensured by using a uniform prior for $\theta$ over $(0.02,0.15)$ for the inverse problem.

\begin{figure}
	\includegraphics[width=\textwidth]{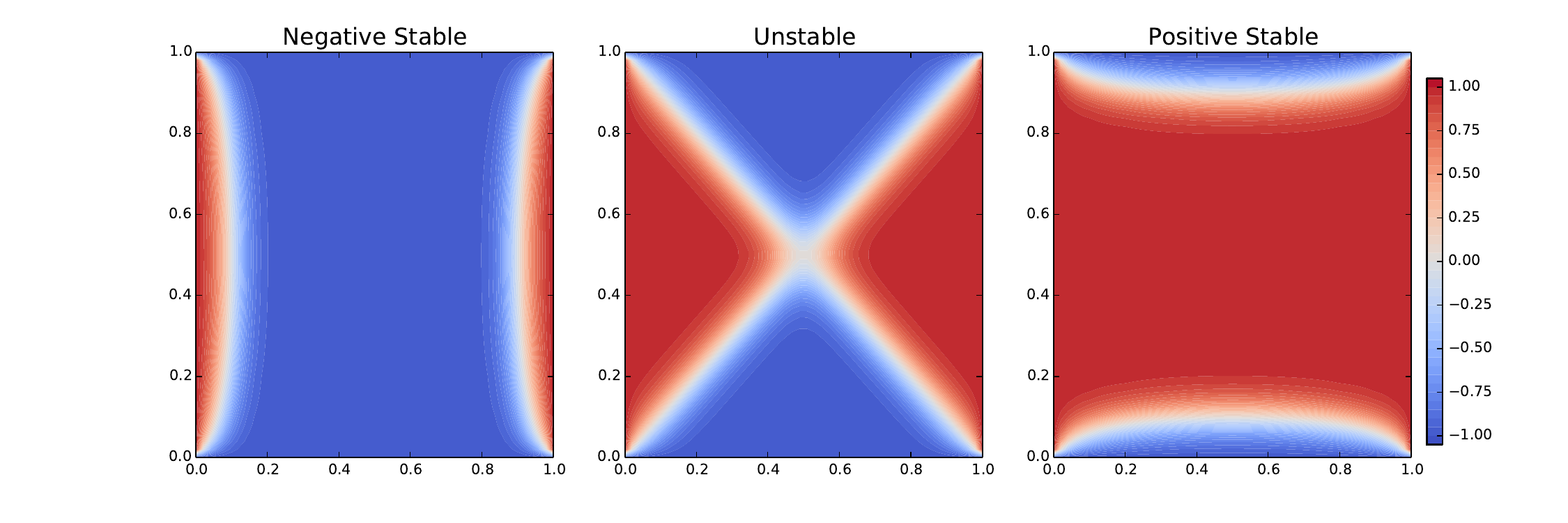}
	\caption{Three distinct solutions of the Allen--Cahn equations at $\delta^\dagger = 0.04$.}
	\label{fig:AC_sols}
\end{figure}

For the PMM, the reference measure was based on the squared-exponential kernel.
The length scale $\ell$ was assigned a standard half-range Cauchy prior and was marginalised, again following the recommendation of \cite{Gelman2006}. The variance parameter $\sigma$ was set to $\sigma=1$ to ensure that the support of the prior covers the anticipated range of the solution.
%\todo{What has happened to the figures here? They seem to have vanished?}
%MCMC trace plots for $(\theta,\ell)$ are shown in Figure~\ref{fig:AC_trace}.

Experimental designs were computed as discussed in Section~\ref{sec:experimental_design}.
One design is shown in Figure \ref{fig:AC_designs}, in this case based on the the solution at $\theta^\dagger = 0.04$. 
The space-filling form of this design perhaps highlights an inefficiency in the assumption of isotropic covariance; in the case of the Allen--Cahn system, it is clear that the three solutions are flat in most of the domain but sharply varying in specific regions.

\begin{figure}
	\centering
	\includegraphics[width=0.8\textwidth]{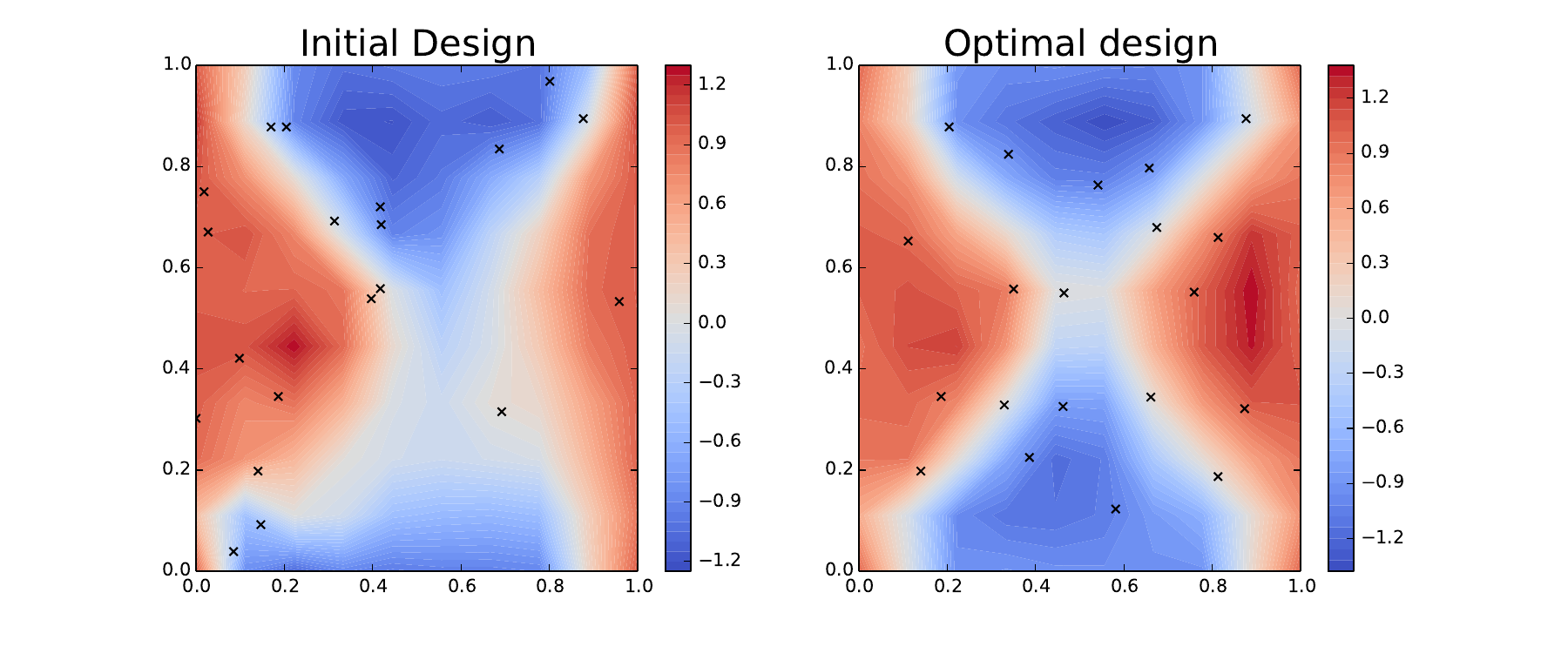}
	\caption{Application to Allen--Cahn: Initial vs. optimal design, $m_{\mathcal{A}} = 20$ points. The heat map shown is the mean function of the conditional measure $\Pi_u^{\bm{g},\bm{b}}$ for the unstable solution, the accuracy of which is controlled by the quality of the design $X_0$.
	Left: Initialisation used in the optimisation method.
	Right: Final design after optimisation.}
	\label{fig:AC_designs}
\end{figure}

Posterior distributions, generated using the PMM, are shown in Figure~\ref{fig:AC_posterior} based on $m_{\mathcal{A}}=5$, $10$, $20$, $40$ and $80$. Posteriors generated with FEM, which do not provide probabilistic quantification for discretisation error, are also included.
Results showed that both the PMM and the FEM method posses similar bias for smaller numbers of design points or coarser meshes, respectively. However while the posteriors generated using FEM are sharply peaked around incorrect values when a coarse mesh is used, a larger, more appropriate variance is reported by the PMM.

\begin{figure}
	\begin{subfigure}{0.49\textwidth}
		\centering
		\includegraphics[width=\textwidth]{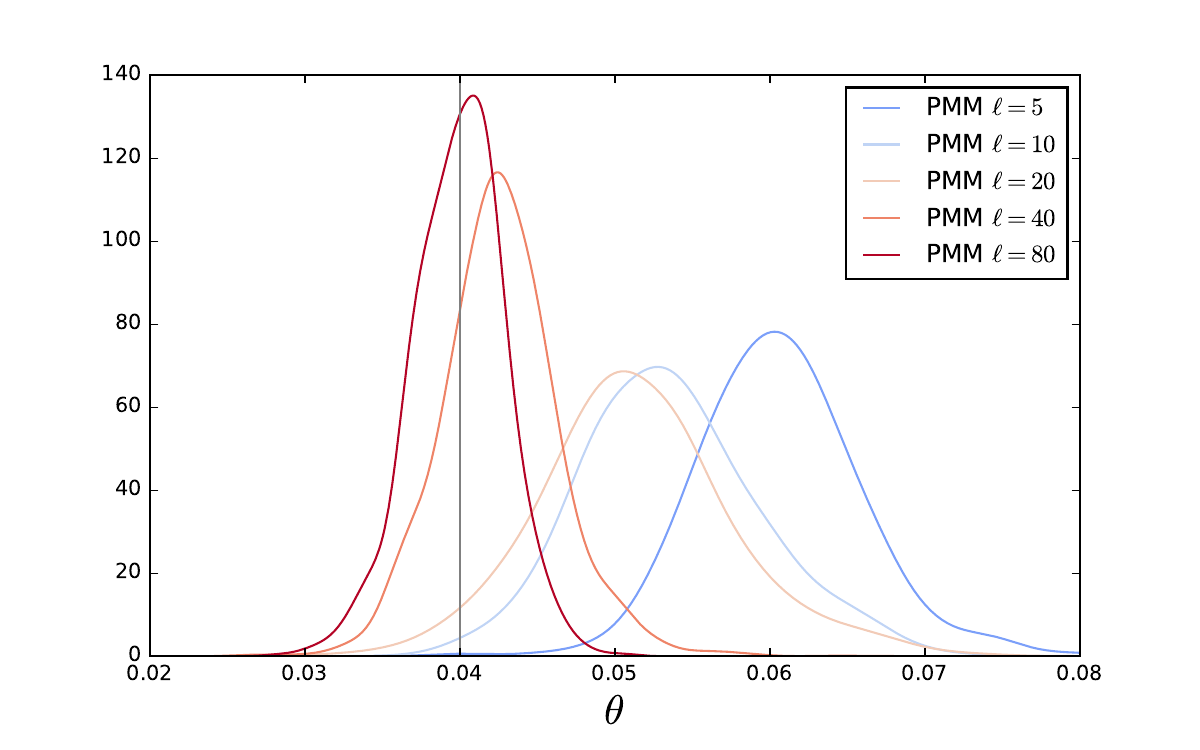}
		\caption{Probabilistic.}
		\label{fig:AC_posterior_meshless}
	\end{subfigure}
	~
	\begin{subfigure}{0.49\textwidth}
		\centering
		\includegraphics[width=\textwidth]{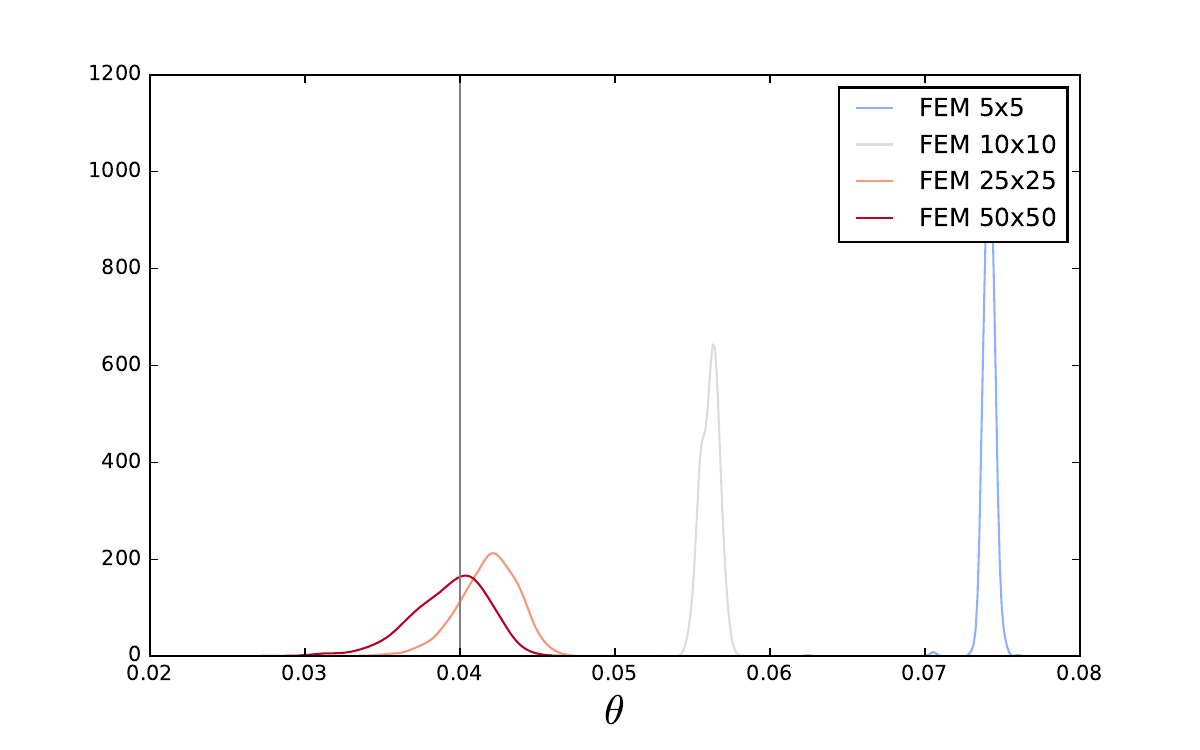}
		\caption{Standard.} \label{fig:AC_posterior_FEA}
	\end{subfigure}
	\caption{Application to Allen--Cahn: Posterior distributions for $\theta$, computed with (a) a probabilistic meshless method (PMM), (b) a finite element method (FEM).
	In (a) the legend denotes the number $m_{\mathcal{A}}$ of design points used by the PMM, while in (b) the legend denotes the size of the mesh used by the FEM.}
	\label{fig:AC_posterior}
\end{figure}

\section{Discussion} \label{sec:discussion}

This paper introduced and explored the concept of a probabilistic meshless method.
While standard numerical solvers return a single approximation to the solution, a probabilistic solver instead returns a full distribution over the solution space, with a view to capturing epistemic uncertainty due to discretisation error.
Our contribution provides theoretical support to this approach and demonstrates that the quantification of numerical error provided by these methods is meaningful, ensuring posterior contraction in an appropriate limit.
Through two example applications, the integration of a probabilistic model for discretisation error into the inverse problem was shown to enable valid statistical inferences to be drawn, overcoming the problems of bias and/or over-confidence that result in practice with standard methods.

Attention was restricted to strong-form solution of stationary PDEs. In future we seek to relax this restriction to examine parabolic, time-evolving PDEs.
To restrict the presentation we did not consider combining probabilistic meshless methods with emulation of the data-likelihood \citep{Stuart2016} or reduced-model approaches \citep{Cohen2015}.
These would provide an obvious and immediate reduction in computational cost in the examples studied in this paper.

The method proposed here shares similarities with the meshless construction recently presented in \cite{Owhadi2016}.
In that work, Owhadi shows how the meshless approach can be made to operate locally in space using a construction called ``gamblets''.
A fundamental distinction of the present paper is that uncertainty due to numerical error was propagated into the inverse problem, whereas \cite{Owhadi2016} does not make use of the probability model beyond observing its Bayesian interpretation.
A second, less fundamental, distinction is that, although Owhadi's method is in principle meshless, computations are performed on a grid, so that there is no analogue of the experimental design approach pursued in this paper.
Future work will aim to combine and leverage the strengths of both methods.

A limitation of this work was the assumption of linearity in the solution $u$ and the restriction to Gaussian prior distributions. 
The first of these assumptions is addressed for a limited class of problems by the work in Section~\ref{sec:nonlinear}. The Gaussian assumption is useful in that posterior distributions can be obtained analytically, but has several drawbacks. In particular, it can be the case that prior information for PDEs is difficult to encode into a Gaussian prior, such as when the solution is known to be bounded. These limitations will be addressed in future work, c.f.\ \cite{Cockayne:2017}.

\appendix
\section{Proofs}

\begin{proof}[Proof of Proposition~\ref{RKHS}]
First, we claim that $\Lambda$ satisfies
\begin{equation}
\Lambda(\bm{x}',\bm{x}) = \int_D G_\Lambda(\bm{x}',\bm{z}) G_\Lambda(\bm{x},\bm{z}) \wrt\bm{z} \label{eq:lambda_claim}
\end{equation}
where $G_\Lambda$ is the Green's function for the system $(\mathcal{A}_\Lambda , \mathcal{B}_\Lambda)$ defined in Eq.~\eqref{eq:laplacian}.
Indeed, since the reproducing kernel must be unique, it suffices to verify that the right-hand side of Eq.~\eqref{eq:lambda_claim} is reproducing in $H_\Lambda(D)$:
\begin{align*}
\left\langle g , \int_D G_\Lambda(\quark,\bm{z}) G_\Lambda(\bm{x},\bm{z}) \wrt\bm{z} \right\rangle_\Lambda
& = \left\langle \mathcal{A}_\Lambda g , \mathcal{A}_\Lambda  \int_D G_\Lambda(\quark,\bm{z}) G_\Lambda(\bm{x},\bm{z}) \wrt\bm{z} \right\rangle_{L^2(D)} \\
& = \left\langle \mathcal{A}_\Lambda g ,  G_\Lambda(\bm{x},\quark) \right\rangle_{L^2(D)} \\
& = g(\bm{x}) ,
\end{align*}
where we have used the definition of the Green's function $G_\Lambda$ and the fact that
\begin{align*}
\mathcal{A}_\Lambda \int_D G_\Lambda(\quark,\bm{z}) G_\Lambda(\bm{x},\bm{z}) \wrt\bm{z}
& = \int_D [\mathcal{A}_\Lambda G_\Lambda(\quark,\bm{z})] G_\Lambda(\bm{x},\bm{z}) \wrt\bm{z} \\
& = \int_D [\delta(\quark - \bm{z})] G_\Lambda(\bm{x},\bm{z}) \wrt\bm{z} \\
& = G_\Lambda(\bm{x},\quark).
\end{align*}
Second, for $v \in H_\textup{nat}(D)$ and using the definition of the Green's functions $G$ and $G_\Lambda$,
\begin{align*}
	\langle v , k_\textup{nat}(\quark,\bm{x}) \rangle_\textup{nat} & = \int_D [\mathcal{A}_\Lambda \mathcal{A} v(\bm{z})] [\mathcal{A}_\Lambda \mathcal{A} k_\textup{nat}(\bm{z},\bm{x})]  \wrt\bm{z} \\
\text{Eq.~\eqref{eq:k_definition}} \implies	& = \int_D\int_D\int_D [\mathcal{A}_\Lambda \mathcal{A} v(\bm{z})] [\mathcal{A}_\Lambda \underbrace{ \mathcal{A} G(\bm{z},\bm{z}') }_{ \delta(\bm{z} - \bm{z}') } G(\bm{x},\bm{z}'') \Lambda(\bm{z}',\bm{z}'') ] \wrt\bm{z} \wrt\bm{z}' \wrt\bm{z}'' \\
	& = \int_D\int_D [\mathcal{A}_\Lambda \mathcal{A} v(\bm{z})] [\underbrace{ \mathcal{A}_\Lambda \Lambda(\bm{z},\bm{z}'') }_{ G_\Lambda(\bm{z},\bm{z}'') } G(\bm{x},\bm{z}'') ] \wrt\bm{z}  \wrt\bm{z}'' \\
	& = \int_D \underbrace{ \int_D [\mathcal{A}_\Lambda \mathcal{A} v(\bm{z})] G_\Lambda(\bm{z},\bm{z}'') \wrt\bm{z} }_{ \mathcal{A} v(\bm{z}'') } G(\bm{x},\bm{z}'')    \wrt\bm{z}''  \\
	& = v(\bm{x}).
\end{align*}
This proves the reproducing property in $H_\textup{nat}(D)$.
By the Cauchy--Schwarz inequality,
\begin{align*}
	|v(\bm{x})| & = |\langle v , k_\textup{nat}(\quark,\bm{x}) \rangle_\textup{nat}| \\
	& \leq \langle v,v \rangle_\textup{nat}^{1/2} \langle k_\textup{nat}(\quark,\bm{x}) , k_\textup{nat}(\quark,\bm{x}) \rangle_\textup{nat}^{1/2} = \|v \|_\textup{nat} k_\textup{nat}(\bm{x},\bm{x})^{1/2},
\end{align*}
which proves the boundedness of the evaluation function.
\end{proof}

\begin{proof}[Proof of Proposition~\ref{noise model}]
	This is essentially Proposition~3.1 of \cite{Owhadi2015}.
	Noting that 
	\begin{equation*}
	u(\bm{x}) = \int_D G(\bm{x},\bm{z}) g(\bm{z}) \wrt\bm{z}, 
	\end{equation*}
	we have
	\begin{align*}
		\mathbb{E}[u(\bm{x}) u(\bm{x}')] &= \mathbb{E}\left[\int_D\int_D G(\bm{x},\bm{z}) g(\bm{z}) G(\bm{x}',\bm{z}') g(\bm{z}') \wrt\bm{z} \wrt\bm{z}'\right] \\
		&= \int_D\int_D G(\bm{x},\bm{z}) G(\bm{x}',\bm{z}') \mathbb{E}[ g(\bm{z}) g(\bm{z}') ] \wrt\bm{z} \wrt\bm{z}' \\
		&= \int_D\int_D G(\bm{x},\bm{z}) G(\bm{x}',\bm{z}') \Lambda(\bm{z},\bm{z}') \wrt\bm{z} \wrt\bm{z}' = k_\textup{nat}(\bm{x},\bm{x}').
	\end{align*}
	Moreover, the stochastic process is well-defined since, from Proposition~\ref{RKHS}, the covariance function $k_\textup{nat}$ is a positive definite function.
\end{proof}

\begin{proof}[Proof of Proposition~\ref{prop:integral_type_kernel}]
Since $D$ is compact and $\tilde{k}$ is symmetric and positive definite, Mercer's theorem \citep{Steinwart2012} guarantees the existence of a countable set of eigenvalues and eigenvectors $\{\lambda_i\}$ and $\{e_i\}$ such that $\lambda_1 \geq \lambda_2 \geq \dots > 0$, $\sum_{i} \lambda_i < \infty$, $\{e_i\}$ are an orthonormal basis of $L^2(D)$ and 
\begin{equation*}
\tilde{k}(\bm{x},\bm{x}') = \sum_{i} \lambda_i e_i(\bm{x}) e_i(\bm{x}').
\end{equation*}
Moreover,
\begin{equation*}
\left\|\sum_{i} c_i \sqrt{\lambda_i} e_i\right\|_{\tilde{k}}^2 = \sum_{i} c_i^2.
\end{equation*}
Then the integral-type kernel $\hat{k}$ can be checked to have eigenvalues and eigenvectors $\{\lambda_i^2\}$ and $\{e_i\}$.
To see that (i) is satisfied, define a stochastic process $S = \sum_i \xi_i \lambda_i e_i$ with $\xi_i \sim N(0,1)$ independent, corresponding to a generic sample from $\Pi_u$.
Then $S$ almost surely lies in $H(D)$, since 
\begin{equation*}
\mathbb{E}\left(\|S\|_{\tilde{k}}^2\right) = \mathbb{E}\left(\sum_{i} \xi_i^2 \lambda_i\right) = \sum_{i} \lambda_i < \infty.
\end{equation*}
To see that (ii) is satisfied, given an element $c = \sum_{i} c_i \sqrt{\lambda_i} e_i$ of $H(D)$ we have $\sum_{i} c_i^2 < \infty$ and so the partial sums $c^{(N)} = \sum_{i=1}^N c_i \sqrt{\lambda_i} e_i$ converge to $c$ under the norm $\|\quark\|_{\tilde{k}}$.
Since each $c^{(N)}$ also belongs to $H_{\hat{k}}(D)$, it follows that the set $H_{\hat{k}}(D)$ is dense in the space $(H(D),\|\quark\|_{\tilde{k}})$.
\end{proof}

\begin{proof}[Proof of Proposition~\ref{prop:natural_local_accuracy}]
First note that 
\begin{equation*}
\mu_\omega(\bm{x}) = \sum_{\mathcal{I} \in \{\mathcal{A} , \mathcal{B}\}} \sum_{j=1}^{n_{\mathcal{I}}} w_j^{\mathcal{I}} \mathcal{I}u(\bm{x}_{0,j}^{\mathcal{I}})
\end{equation*}
where the weights are 
\begin{equation}
	\label{proof weights}
	\begin{bmatrix} \bm{w}^{\mathcal{A}} \\ \bm{w}^{\mathcal{B}} \end{bmatrix}^\top  = \bar{\mathcal{L}} \hat{\bm{K}}(\bm{x},X_0) [\mathcal{L}\bar{\mathcal{L}} \hat{\bm{K}}(X_0)]^{-1}.
\end{equation}
	Now, from the reproducing property, we have
	\begin{equation*}
		\mu(\bm{x}) = \sum_{\mathcal{I} \in \{\mathcal{A} , \mathcal{B}\}} \sum_{j=1}^{n_{\mathcal{I}}} w_j^{\mathcal{I}} \mathcal{I} \inner{ u_\omega , \hat{k}(\quark,\bm{x}_{0,j}^{\mathcal{I}}) }_{\hat{k}}
			= \inner{ u_\omega , \sum_{\mathcal{I} \in \{\mathcal{A} , \mathcal{B}\}} \sum_{j=1}^{n_{\mathcal{I}}} w_j^{\mathcal{I}} \bar{\mathcal{I}} \hat{k}(\quark,\bm{x}_{0,j}^{\mathcal{I}}) }_{\hat{k}}
	\end{equation*}
	and hence, using the reproducing property again,
	\begin{equation*}
		u_\omega(\bm{x}) - \mu_\omega(\bm{x}) = \inner{u , \hat{k}(\quark,\bm{x}) -  \sum_{\mathcal{I} \in \{\mathcal{A} , \mathcal{B}\}} \sum_{j=1}^{n_{\mathcal{I}}} w_j^{\mathcal{I}} \bar{\mathcal{I}} \hat{k}(\quark,\bm{x}_{0,j}^{\mathcal{I}}) }_{\hat{k}}.
	\end{equation*}
	Finally, the Cauchy--Schwarz inequality yields
	\[
		\abs{u_\omega(\bm{x}) - \mu_\omega(\bm{x})} \leq \norm{u_\omega }_{\hat{k}} \norm{ \hat{k}(\quark,\bm{x}) - \sum_{\mathcal{I} \in \{\mathcal{A} , \mathcal{B}\}} \sum_{j=1}^{n_{\mathcal{I}}} w_j^{\mathcal{I}} \bar{\mathcal{I}} \hat{k}(\quark,\bm{x}_{0,j}^{\mathcal{I}}) }_{\hat{k}}.
	\]
	Upon substitution of the expression for the weights $w_j^{\mathcal{I}}$ provided in Eq.~\eqref{proof weights}, the second term is recognised as $\sigma_\omega(\bm{x})$.
\end{proof}

\begin{proof}[Proof of Theorem~\ref{thm:fwd_contraction}]
Suppose $u$ is a random variable with distribution $\Pi_u^{\bm{g},\bm{b}}$.
Then we have
\begin{align*}
	\int_{\pnspace} \|u - u_\omega\|_2^2 \wrt \Pi_u^{\bm{g},\bm{b}}
	& \leq \int_{\pnspace} \|u - \mu_\omega\|_2^2 \wrt \Pi_u^{\bm{g},\bm{b}} + \underbrace{\int_{\pnspace} \|\mu_\omega - u_\omega\|_2^2 \wrt \Pi_u^{\bm{g},\bm{b}}}_{\text{indep.\ of }u} \\
	& = \int_D \int_{\pnspace} (u(\bm{x}) - \mu_\omega(\bm{x}))^2 \wrt \Pi_u^{\bm{g},\bm{b}} \wrt\bm{x} + \int_D (\mu_\omega(\bm{x}) - u_\omega(\bm{x}))^2 \wrt \bm{x} \\
	& \leq \int_D \sigma_\omega(\bm{x})^2 \wrt\bm{x} + \|u_\omega\|_{\hat{k}}^2 \int_D \sigma_\omega(\bm{x})^2 \wrt\bm{x}
\end{align*}
where the second line uses Fubini's theorem to interchange the order of integration and the final line makes use of Proposition~\ref{prop:natural_local_accuracy}.
Since the domain $D$ is bounded, we have from Proposition~\ref{prop:sigma} that there exists a constant $C_{\omega}^F$, dependent on $\omega$ but independent of $\pnparam$, for which $\int_D \sigma_\omega(\bm{x})^2 \wrt\bm{x} \leq C_{\omega}^F h^{2\beta - 2\rho - d}$ and therefore
\begin{equation*}
\int_{\pnspace} \|u - u_\omega\|_2^2 \wrt \Pi_u^{\bm{g},\bm{b}} \leq C_{\omega}^F (1 + \|u_\omega\|_{\hat{k}}^2) h^{2\beta - 2\rho - d}.
\end{equation*}
The result follows from Markov's inequality:
for any $\epsilon > 0$,
\begin{equation*}
\Pi_u^{\bm{g},\bm{b}}\{u \in H(D) : \|u - u_\omega\|_2^2 > \epsilon \} \leq \frac{\int \|u - u_\omega\|_2^2 \; \wrt \Pi_u^{\bm{g},\bm{b}}}{\epsilon} \leq \frac{C_{\omega}^F (1 + \|u_\omega\|_{\hat{k}}^2) h^{2\beta - 2\rho - d}}{\epsilon}
\end{equation*}
as required.
\end{proof}

\begin{proof}[Proof of Theorem~\ref{thm:well_defined}]
The proof below consists of three steps.

\medskip
\noindent \textbf{Step \#1:}
Fix $\omega \in \Omega$ and $t \leq 0$.
Consider an element $u$ in the Hilbert scale of spaces $H_\textup{nat}^t(D)$.
Using the fact that $\sqrt{\lambda_i} e_i$ are an orthonormal basis for $H_\textup{nat}^t(D)$, we have
\begin{equation*}
\|u\|_{\textup{nat},t}^2 = \sum_{i=1}^\infty \lambda_i^{-t} \langle u , \sqrt{\lambda_i} e_i \rangle_\textup{nat}^2.
\end{equation*}
By construction, a generic element $g \in H_\Lambda(D)$ can be written as $g = \sum_{i=1}^\infty c_i g_i$ where $g_i = \mathcal{A} \sqrt{\lambda_i} e_i$ form an orthonormal basis $\{g_i\}_{i=1}^\infty$ for $H_\Lambda(D)$ and $\|g\|_\Lambda^2 = \sum_{i=1}^\infty c_i^2 < \infty$.
Thus we have
\begin{equation*}
\|u\|_{\textup{nat},t}^2 = \sum_{i=1}^\infty \lambda_i^{-t} \langle \mathcal{A} u , \mathcal{A} \sqrt{\lambda_i} e_i \rangle_\Lambda^2 = \sum_{i=1}^\infty \lambda_i^{-t} \langle \mathcal{A} u , g_i \rangle_\Lambda^2 = \|\mathcal{A} u\|_{\Lambda,t}^2.
\end{equation*}

\medskip
\noindent \textbf{Step \#2:}
The stochastic process $g(\quark,\pnparam)$ with kernel $\Lambda$ can be characterised through the Karhunen--Lo\`{e}ve expansion as $g(\quark,\pnparam) = \sum_{i=1}^\infty \xi_i(\pnparam) g_i(\quark)$, where the $\xi_i(\pnparam)$ are independent standard normal random variables under $\mathbb{P}_{\pnspace}$ and the $g_i(\quark) = \mathcal{A} \sqrt{\lambda_i} e_i(\quark)$ were defined in Step \#1.
From sub-additivity of measure, we have that 
\begin{align*}
\mathbb{E}_{\pnspace} \|g(\quark,\pnparam)\|_{\Lambda,t}^2
= \mathbb{E}_{\pnspace} \sum_{i=1}^\infty \lambda_i^{-t} \xi_i(\pnparam)^2 
\leq \sum_{i=1}^\infty \lambda_i^{-t} \mathbb{E}_{\pnspace} \xi_i(\pnparam)^2 
= \sum_{i=1}^\infty \lambda_i^{-t}.
\end{align*}
From (A2) we obtain (recall $t < -d/2\alpha < 0$)
\begin{equation*}
\mathbb{E}_{\pnspace} \|g(\quark,\pnparam)\|_{\Lambda,t}^2 \; \leq \; C_\omega^{t} \sum_{i=1}^\infty [\lambda_i^{(\alpha)}]^{-t}.
\end{equation*}
Combining this with the result of Step \#1 implies that
\begin{equation*}
\mathbb{E}_{\pnspace} \|u(\quark,\omega,\pnparam)\|_{\textup{nat},t}^2 = \mathbb{E}_{\pnspace} \|\mathcal{A}u(\quark,\omega,\pnparam)\|_{\Lambda,t}^2 = \mathbb{E}_{\pnspace} \|g(\quark,\pnparam)\|_{\Lambda,t}^2 \leq C_\omega^{t} \sum_{i=1}^\infty [\lambda_i^{(\alpha)}]^{-t}.
\end{equation*}

\medskip
\noindent \textbf{Step \#3:}
Consider the doubly stochastic process $u(\quark,\omega,\pnparam)$ and a double expectation $\mathbb{E}_\Omega \mathbb{E}_{\pnspace}$ over $\omega$ and $\pnparam$.
From (A2) we have
\[
\mathbb{E}_\Omega \mathbb{E}_{\pnspace} \|u(\quark,\omega,\pnparam)\|_{\mathbb{H}^\alpha(D),t}^2 \leq \mathbb{E}_\Omega \; C_{\omega,t} \; \mathbb{E}_{\pnspace} \|u(\quark,\omega,\pnparam)\|_{\textup{nat},t}^2.
\]
The output of Step \#2 then implies that
\[
\mathbb{E}_\Omega \mathbb{E}_{\pnspace} \|u(\quark,\omega,\pnparam)\|_{\mathbb{H}^\alpha(D),t}^2 \; \leq \;  \underbrace{\mathbb{E}_\Omega[\; C_\omega^{t} \; C_{\omega,t} \; ]}_{(\ast)} \sum_{i=1}^\infty [\lambda_i^{(\alpha)}]^{-t}.
\]
Under (A2), the term $(\ast)$ is finite when $t < -d/2\alpha$.
Observe that, since $\lambda_i^{(\alpha)} \asymp i^{-2\alpha / d}$, the right hand side is finite for all values of $t < - d / 2\alpha$.
This implies that $u \in L_{\mathbb{P}_\Omega \mathbb{P}_{\pnspace}}^2(\Omega , \pnspace; [\mathbb{H}^{\alpha}(D)]^t)$ for all $-1 < t < -d / 2 \alpha$ and hence $u \in L_{\mathbb{P}_\Omega \mathbb{P}_{\pnspace}}^2(\Omega , \pnspace; \mathbb{H}^s(D))$ for all $0 < s < \alpha - d/2$.
\end{proof}

%!TEX root=./strong.tex

\begin{proof}[Proof of Theorem~\ref{thm:theta_consistency}]

Throughout we use the L\"{o}wner ordering $\preceq$ on positive semidefinite matrices, i.e.\ $\bm{A} \preceq \bm{B}$ if and only if $\bm{B} - \bm{A}$ is positive semidefinite. 
In particular, we use the facts that $\bm{x}^\top \bm{A} \bm{x} \leq \bm{x}^\top \bm{B} \bm{x}$ and $\bm{B}^{-1} \preceq \bm{A}^{-1}$ whenever $\bm{A} \preceq \bm{B}$.
For more information, see \cite{Bernstein:2009}.

From \cite[Theorem~4.9]{Dashti2014}, it is sufficient to show that the two potentials $\Phi_h(\bm{y},\theta)$ and $\Phi(\bm{y},\theta)$ are asymptotically identical. Since $\bm{y}$ is fixed throughout we suppress this argument, and consider $\abs{\Phi_h(\theta) - \Phi(\theta)}$. Let $\Phi_{\text{coll}}(\theta) = (\bm{y} - \bm{\mu})^\top \bm{\Gamma}^{-1} (\bm{y} - \bm{\mu})$ denote the approximate potential based on symmetric collocation and with discretisation error ignored. 
Then 
\begin{equation}
	\abs{ \Phi_h(\theta) - \Phi(\theta) } \leq \abs{\Phi_h(\theta) - \Phi_\text{coll}(\theta)} + \abs{\Phi_\text{coll}(\theta) - \Phi(\theta)}  . \label{eq:theta_proof_main_ineq}
\end{equation}

Considering the second of these terms, we have
\begin{align*}
	& \abs{\Phi_\text{coll}(\theta) - \Phi(\theta)} \\
	&= \abs{
		(\bm{y} - \bm{\mu})^\top \bm{\Gamma}^{-1} (\bm{y} - \bm{\mu}) - \Phi(\theta)
	} \\
	&= \abs{
		(\bm{y} - \bm{u})^\top \bm{\Gamma}^{-1} (\bm{y} - \bm{u}) 
		+ 2(\bm{y} - \bm{u})^\top \bm{\Gamma}^{-1}(\bm{u} - \bm{\mu}) 
		+ (\bm{u} - \bm{\mu})^\top \bm{\Gamma}^{-1} (\bm{u} - \bm{\mu}) - \Phi(\theta)
	} \\
	&= \abs{
		2(\bm{y} - \bm{u})^\top \bm{\Gamma}^{-1}(\bm{u} - \bm{\mu}) 
		+ (\bm{u} - \bm{\mu})^\top \bm{\Gamma}^{-1} (\bm{u} - \bm{\mu})
	} \\
	&\leq \underbrace{(\bm{u} - \bm{\mu})^\top \bm{\Gamma}^{-1}(\bm{u} - \bm{\mu})}_{(a)}
	+ 2\underbrace{\abs{(\bm{y} - \bm{u})^\top \bm{\Gamma}^{-1} (\bm{u} - \bm{\mu})}}_{(b)} .
\end{align*}

Now each of terms $(a)$ and $(b)$ can be bounded since, for a positive semidefinite matrix $\bm{A}$ with maximal eigenvalue $\lambda_{\max}[\bm{A}]$ it holds that $\bm{A} \preceq \lambda_{\max}[\bm{A}] \bm{I}$. 
Let $\norm{\bm{x}}_{\bm{\Gamma}} := \sqrt{\bm{x}^\top \bm{\Gamma}^{-1} \bm{x}}$ and let $\gamma = \left(\lambda_{\min}[\bm{\Gamma}]\right)^{-1}$. 
Then $\|\bm{x}\|_{\bm{\Gamma}} \leq \sqrt{\gamma} \|\bm{x}\|_2$.
Thus for $(a)$:
\begin{align*}
	\abs{(\bm{u} - \bm{\mu})^\top \bm{\Gamma}^{-1} (\bm{u} - \bm{\mu})}
		&\leq \gamma \norm{\bm{u} - \bm{\mu}}_2^2 \\
		&\leq \gamma\; n \norm{\bm{u} - \bm{\mu}}_\infty^2 .
\end{align*}
Similarly for $(b)$:
\begin{align*}
	\abs{(\bm{y} - \bm{u})^\top \bm{\Gamma}^{-1} (\bm{u} - \bm{\mu})} &\leq 
		\left[
			(\bm{y} - \bm{u})^\top \bm{\Gamma}^{-1} (\bm{y} - \bm{u}) \cdot
			(\bm{u} - \bm{\mu})^\top \bm{\Gamma}^{-1} (\bm{u} - \bm{\mu}) 
		\right]^{\frac{1}{2}} \\
		&\leq \gamma \norm{\bm{y} - \bm{u}}_2 \norm{\bm{u} - \bm{\mu}}_2 \\
		&\leq \gamma \norm{\bm{y} - \bm{u}}_2 \cdot \sqrt{n}\norm{\bm{u} - \bm{\mu}}_\infty
\end{align*}
where the first inequality used is the Cauchy--Schwarz inequality, while the final line uses 
$\norm{\quark}_2 \leq \sqrt{n}\norm{\quark}_\infty$.

Now returning to the first term in Eq.~\eqref{eq:theta_proof_main_ineq}
\begin{align*}
	\abs{\Phi_h(\theta) - \Phi_\text{coll}(\theta)} &= \abs{
		(\bm{y} - \bm{\mu})^\top (\bm{\Sigma} + \bm{\Gamma})^{-1}(\bm{y} - \bm{\mu}) - (\bm{y} - \bm{\mu})^\top \bm{\Gamma}^{-1} (\bm{y} - \bm{\mu})
	} .
\end{align*}
Applying the Woodbury identity we obtain
\begin{align*}
	(\bm{y} - \bm{\mu})^\top (\bm{\Sigma} + \bm{\Gamma})^{-1} (\bm{y} - \bm{\mu}) 
		&= (\bm{y} - \bm{\mu})^\top \bm{\Gamma}^{-1}(\bm{y} - \bm{\mu}) \\
		& \qquad - (\bm{y} - \bm{\mu})^\top \bm{\Gamma}^{-1} (\bm{\Sigma}^{-1} + \bm{\Gamma}^{-1})^{-1} \bm{\Gamma}^{-1} (\bm{y} - \bm{\mu})
\end{align*}
and so, letting $\bm{M} := \bm{\Gamma}^{-1} (\bm{\Sigma}^{-1} + \bm{\Gamma}^{-1})^{-1} \bm{\Gamma}^{-1}$ and applying the triangle inequality in the $\norm{\cdot}_{\bm{M}^{-1}}$ norm we obtain
\begin{align*}
	\abs{\Phi_h(\theta) - \Phi_\text{coll}(\theta)} &= (\bm{y} - \bm{\mu})^\top \bm{M} (\bm{y} - \bm{\mu}) \\
	&\leq \underbrace{ (\bm{y} - \bm{u})^\top \bm{M} (\bm{y} - \bm{u})}_{(c)} 
		+ \underbrace{(\bm{u} - \bm{\mu})^\top \bm{M} (\bm{u} - \bm{\mu})}_{(d)} .
\end{align*}
For $(c)$, we note that $\bm{M} \preceq \bm{\Gamma}^{-1} \bm{\Sigma} \bm{\Gamma}^{-1}$ and arrive at
\begin{align*}
(\bm{y} - \bm{u})^\top \bm{M} (\bm{y} - \bm{u}) & \leq (\bm{y} - \bm{u})^\top \bm{\Gamma}^{-1} \bm{\Sigma} \bm{\Gamma}^{-1} (\bm{y} - \bm{u}) \\
& \leq \gamma^2 \text{Tr}(\bm{\Sigma}) \norm{\bm{y} - \bm{u}}_2^2.
\end{align*}
Now note that $\bm{M} \preceq \bm{\Gamma}^{-1}$, since $(\bm{\Sigma}^{-1} + \bm{\Gamma}^{-1})^{-1} \preceq \bm{\Gamma}$. For term $(d)$, this means
\begin{align*}
	(\bm{u} - \bm{\mu})^\top \bm{M} (\bm{u} - \bm{\mu}) &\leq (\bm{u} - \bm{\mu})^\top \bm{\Gamma}^{-1} (\bm{u} - \bm{\mu})
\end{align*}
which we recognise as $(a)$ above. 

Combining these bounds and using Propositions~\ref{prop:natural_local_accuracy} and \ref{prop:sigma}, we have that
\begin{align*}
	\abs{\Phi_h(\theta) - \Phi(\theta)} &\leq 2 \gamma \sqrt{n}  \norm{\bm{y}-\bm{u}}_2 \norm{\bm{u} - \bm{\mu}}_\infty
		+ 2\gamma n \norm{\bm{u} - \bm{\mu}}_\infty^2 
		+ \gamma^2 \trace(\bm{\Sigma}) \norm{\bm{y} - \bm{u}}_2^2 \\
	&\leq 
		2 \gamma \sqrt{n} \norm{\bm{y} - \bm{u}}_2 \norm{u_\omega}_{\hat{k}} C_\omega^F h^{\beta - \rho - d/2} \\
	&\qquad+ \left[
		2\gamma n \norm{u_\omega}_{\hat{k}}^2
		+ \gamma^2 n \norm{\bm{y} - \bm{u}}_2^2
	\right] (C_\omega^F)^2 h^{2\beta - 2\rho - d} \\
	&\leq h^{\beta - \rho - d/2} \bigg[ 
		2\gamma\sqrt{n} \norm{\bm{y} - \bm{u}}_2  \norm{u_\omega}_{\hat{k}} C^F_\omega 
		+ 2\gamma n  \norm{u_\omega}_{\hat{k}}^2 (C^F_\omega)^2 \\
	&\qquad\qquad\qquad
		+ \gamma^2 n \norm{\bm{y}-\bm{u}}_2^2(C^F_\omega)^2
	\bigg]
\end{align*}
for $h < 1$.

It remains to show that the assumptions required for \cite[Theorem 4.9]{Dashti2014} hold, namedly that there exist functions $M_1,M_2:\reals^+\to\reals^+$ so that
\begin{enumerate}
	\item[D1:] $\Phi(\theta) \geq -M_1(\norm{\theta}_\Theta)$,
	\item[D2:] $\Phi_h(\theta) \geq -M_1(\norm{\theta}_\Theta)$,
	\item[D3:] $\abs{\Phi_h(\theta) - \Phi(\theta)} \leq M_2(\norm{\theta}_\Theta) \varphi(h)$,
	\item[D4:] $\exp(M_1(\norm{\theta}_\Theta) \left( 1 + M_2(\norm{\theta}_\Theta)^2\right)$ is integrable in $\theta$ with respect to $\mu$. \label{assumption:dashti_4}
\end{enumerate}
where $\varphi(h) \to 0$ as $h \to 0$.
If (D1-4) hold then it can be concluded that, for $h$ sufficiently small, $d_{\text{Hell}}(\Pi_\theta^{\bm{y},h},\Pi_\theta^{\bm{y}}) \leq C \varphi(h)$ for some constant $C$, as required.
Clearly we can take $M_1(\norm{\theta}_\Theta) = 0$ to satisfy both (D1) and (D2), as $\Phi(\theta) \geq 0$ and $\Phi_h(\theta) \geq 0$ for all $\theta$. 
We also take $\varphi(h) = h^{\beta - \rho - d/2}$.
Let $\eta = \sup\{ \norm{\bm{u}}_2 : \norm{u}_{\hat{k}} \leq 1 \}$ ($< \infty$).
To define $M_2(\norm{\theta}_\Theta)$, first note:
\begin{align*}
	2\gamma\sqrt{n} \norm{\bm{y} - \bm{u}}_2 \norm{u_\omega}_{\hat{k}} C^F_\omega 
		& \leq 2 \gamma \sqrt{n}  (\norm{\bm{y}}_2 + \norm{\bm{u}}_2)\norm{u_\omega}_{\hat{k}} C_\omega^F \\
		&\leq 2\gamma\sqrt{n} (\norm{\bm{y}}_2 + \eta\norm{u_\omega}_{\hat{k}}) \norm{u_\omega}_{\hat{k}}C_\omega^F \\
		&\leq 2\gamma\sqrt{n} (\norm{\bm{y}}_2 + \eta) C(\norm{\theta}_\Theta) \\
\end{align*}
and also
\begin{align*}
	& [2 \gamma n \norm{u_\omega}_{\hat{k}}^2 + \gamma^2 n \norm{\bm{y} - \bm{u}}_2^2 ] (C_\omega^F)^2 \quad  \\
	&\quad \leq \left[ 
		2 \gamma n \norm{u_\omega}_{\hat{k}}^2 
		+ \gamma^2 n \norm{\bm{y}}_2^2 
		+ 2 \gamma^2 n \norm{\bm{y}}_2 \norm{\bm{u}}_2
		+ \gamma^2 n \norm{\bm{u}}_2^2 
	\right] (C_\omega^F)^2  \\
	& \quad \leq  \left[ 
		2 \gamma n \norm{u_\omega}_{\hat{k}}^2 
		+ \gamma^2 n \norm{\bm{y}}_2^2 
		+ 2 \gamma^2 n  \norm{\bm{y}}_2 \eta \norm{u_\omega}_{\hat{k}}
		+ \gamma^2 n \eta^2 \norm{u_\omega}_{\hat{k}}^2
	\right] (C_\omega^F)^2  \\
	& \quad \leq \left[ 
		2 \gamma n 
		+ \gamma^2 n \norm{\bm{y}}_2^2 
		+ 2 \gamma^2 n \eta
		+ \gamma^2 n \eta^2 
	\right] C(\norm{\theta}_\Theta)^2
\end{align*}
where $C(\norm{\theta}_\Theta)$ is as in (A3). 
Lastly, define
\begin{align*}
	M_2(\norm{\theta}_\Theta) &:= 
	2\gamma\sqrt{n} (\norm{\bm{y}}_2 + \eta) C(\norm{\theta}_\Theta) \\
	&\qquad+  \left[ 
		2 \gamma n 
		+ \gamma^2 n \norm{\bm{y}}_2^2 
		+ 2 \gamma^2 n \eta
		+ \gamma^2 n \eta^2 
	\right] C(\norm{\theta}_\Theta)^2.
\end{align*}
Then by construction (D3) is satisfied.
Furthermore (D4) is satisfied by the second part of (A3).
This completes the proof.

\end{proof}

\vspace{20pt}

\section*{Acknowledgement} 

%%This material was based upon work partially supported by the National Science Foundation under Grant DMS-1127914 to the Statistical and Applied Mathematical Sciences Institute. Any opinions, findings, and conclusions or recommendations expressed in this material are those of the author(s) and do not necessarily reflect the views of the National Science Foundation.
%%We don't need to write this ^^^ again, because it's in the opening footnote of shared.tex, where it takes up less real estate.

The authors thank John Skilling for insight, Patrick Farrell for use of the Matlab source code used in \cite{Farrell2014}, and Fran\c{c}ois-Xavier Briol for feedback.
In addition they are grateful to the developers of the Python libraries \href{https://github.com/HIPS/autograd}{Autograd} and \href{http://sheffieldml.github.io/GPyOpt/}{GPyOpt}.

%\section*{References}

%\todo{The notes field is being printed - e.g. see Behrens and Iske.}
%Not a bug --- we use it for displaying arXiv IDs

\bibliographystyle{plainnat}
\bibliography{jabref}

\end{document}